\newcites{Supp}{REFERENCES}
\newcommand{\showrevisions}{0}
\newif\ifshowrevisions
\newcommand{\newtext}[1]{%
    \if1\showrevisions%
        \textcolor{purple}{#1}%
    \else%
        #1%
    \fi%
}
\theoremstyle{plain}
\newtheorem{proposition}{Proposition}
\theoremstyle{remark}
\newtheorem{remark}{Remark}
\begin{document}

\begin{frontmatter}

\title{Neural Orientation Distribution Fields for Estimation and Uncertainty Quantification in Diffusion MRI}

\author[label1]{William Consagra, Lipeng Ning, Yogesh Rathi}

\affiliation[label1]{organization={Psychiatry Neuroimaging Laboratory, Brigham and Women’s Hospital, Harvard Medical School},
                 addressline={399 Revolution Drive},
                 city={Boston},
                 postcode={02215},
                 state={MA},
             country={United States}}

\begin{abstract}
Inferring brain connectivity and structure \textit{in-vivo} requires accurate estimation of the orientation distribution function (ODF), which encodes key local tissue properties. However, estimating the ODF from diffusion MRI (dMRI) signals is a challenging inverse problem due to obstacles such as significant noise, high-dimensional parameter spaces, and sparse angular measurements. In this paper, we address these challenges by proposing a novel deep-learning based methodology for continuous estimation and uncertainty quantification of the spatially varying ODF field. We use a neural field (NF) to parameterize a random series representation of the latent ODFs, implicitly modeling the often ignored but valuable spatial correlation structures in the data, and thereby improving efficiency in sparse and noisy regimes. An analytic approximation to the posterior predictive distribution is derived which can be used to quantify the uncertainty in the ODF estimate at any spatial location, avoiding the need for expensive resampling-based approaches that are typically employed for this purpose. We present empirical evaluations on both synthetic and real in-vivo diffusion data, demonstrating the advantages of our method over existing approaches. 
\end{abstract}

\begin{keyword}
uncertainty quantification, deep learning, neural field, diffusion MRI, functional data analysis 
\end{keyword}

\end{frontmatter}

\section{Introduction}
The structure of the human brain plays a fundamental role in determining various cognitive phenotypes as well as neurodegenerative and psychiatric disorders. 
To better understand these relationships and facilitate the development of effective treatments and therapies, it is essential to have the ability to map the brain's structure \textit{in vivo}. Diffusion Magnetic Resonance Imaging (dMRI)  is a widely used imaging technique for this purpose \newtext{\citep{basser1996}}. By capturing signals related to the localized diffusion of water molecules, dMRI allows us to investigate tissue composition. Consequently, dMRI signals can provide valuable insights into numerous microstructural properties of interest.
\par 
A primary object of interest in diffusion MRI is the \textit{orientation distribution function} (ODF), which is an antipodally symmetric smooth density function on $\mathbb{S}^2$ characterizing the orientational distribution of water molecule diffusion. Due to the dependence of the angular diffusion of water molecules on the local tissue environment, important biological microstructural properties of interest can be inferred from the ODF. For example, in white matter regions, the modes of the ODF are often used as a surrogate for the dominant directions of coherently aligned collections of myelinated neural axons, referred to as white matter fiber tracts, passing through the region \newtext{\citep{descoteaux2015}}. This local directional information is pieced together in a process known as tractography in order to infer the large scale neural pathways that connect different regions in the brain \newtext{\citep{basser2000,fanzhang2022}}. More generally, the ODF can be linked to various biophysical models of diffusion and used to estimate their parameters, which can then be used as biomarkers in downstream prediction or hypothesis testing procedures \citep{novikov2018,novikov2019quantifying,veraart2020noninvasive}. 
\par 
Estimating the ODF from dMRI signals is a challenging inverse problem with many sources of uncertainty. Being a spatially indexed functional parameter, the ODF is a theoretically infinite dimensional unknown that must be estimated from the necessarily finite number of diffusion signals that can be collected during any diffusion experiment. Further complicating matters are the facts that diffusion data is notoriously noisy \citep{henkelman1985} and that, for emerging  applications such as high (spatial) resolution imaging \citep{wang2021}, acquiring the signals can be expensive and thus the data may exhibit sparsity in the angular domain. Hence, it is crucial to accurately quantify the uncertainty in the ODF estimates, particularly in scenarios characterized by high noise levels or limited angular samples. Furthermore, it is essential to efficiently  propagate this uncertainty to downstream analyses of interest to ensure reliable and robust interpretation of the results \newtext{\citep{siddiqui2021}}. 
\par 
Most of the commonly used approaches estimate the ODFs at all voxels independently \citep{descoteaux2007,michailovich2010,ning2015}, thereby ignoring valuable spatial correlations which can be leveraged to improve statistical efficiency, particularly in the sparse and low SNR regimes. A smaller collection of methods do integrate neighborhood information into estimation. Such methods can largely be grouped into three classes, the first using some variant of kernel smoothing with a data-adaptive bandwidth to identify the local neighborhoods over which to pool information \citep{becker2012,baba2013,becker2014,cabeen2016,chen2019,ye2016,yu2013}, the second utilize energy formulations and promote smoothness with some spatial regularizer \citep{michailovich2011,raj2011,liu2013}, while the third, and arguably most popular approach invokes a two-stage procedure in which the raw diffusion data is first smoothed over local patches and estimation is subsequently performed using the denoised data \citep{veraart2016,grande2019,ramos2021}. 
\par 
To quantify uncertainty in the estimates, bootstrap resampling techniques are the dominant approach \newtext{\citep{jones2008wildboot,berman2008,haroon2008,yap2014}}, though some Bayesian methods have recently been proposed \citep{jens2018}. Resampling approaches can be problematic for sparse samples, as the bootstrapped samples tend to underestimate the true randomness of the distribution \citep{kauermann2009}. Furthermore, the number of voxels in modern imaging applications can be on the order of millions, hence computation and storage can become a bottleneck. Finally, while it is possible to integrate spatial information, e.g. \cite{yap2014}, this serves to further exacerbate computational issues. We also note that such approaches are all heavily dependent on the chosen spatial discretization, hence the method's computational performance may further degrade in the continuous space limit. Many important applications require ``going off the grid'', that is, estimation of the ODF at some unobserved spatial location \newtext{\citep{tournier2012mrtrix}}. This is typically accomplished by repeated application of a post-hoc local interpolation \citep{goh2011}. Even less attention has been paid to adequately quantifying the uncertainty of the interpolates. 
\par 
\subsection{Our Contributions}
In this work, we propose a novel fully-continuous methodology for estimating the spatially varying ODF field. A key aspect of our approach is the implicit modeling of spatial correlations of the latent ODF field by parameterizing its random series decomposition using a deep neural network. Specifically, we use a \textit{neural field} (NF) architecture (also referred to as an implicit neural representation (INR)) \newtext{\citep{sitzmann2020,tancik2020,mancini2022,molaei2023implicit}} to form a continuous parameterization of the ODF field and propose an estimation procedure that remains robust in sparse sample and high noise cases.  
In addition, we propose a novel method for fast and lightweight uncertainty quantification (UQ) for our deep ODF estimator. 
Drawing inspiration from stochastic last-layer approaches \citep{snoek2015,matthews2017}, we employ a Gaussian process assumption and derive the closed-form conditional predictive distribution for the ODF at any spatial location. To maintain computational efficiency, we formulate fast point estimators for the unknown conditioning parameters and plug them into the derived predictive distribution. 
\par 
The remainder of the paper is organized as follows. Section~\ref{sec:overview} gives a brief overview of diffusion magnetic resonance imaging and the ODF inverse problem. Section~\ref{sec:models} develops our modeling framework for the latent ODF field, including a statistical model for the observed data and parametric approximations using a NF model. Section~\ref{sec:stat_inference} outlines our proposed statistical inference procedure and discusses how to quantify relevant uncertainties. Implementation details and empirical evaluation, including simulation studies for both 2D and 3D phantoms and real in-vivo diffusion data, are reported in Section~\ref{sec:experiments}. \newtext{Section~\ref{sec:discussion} provides additional discussions and directions for future research, and Section~\ref{sec:conclusion} concludes.}

\section{Diffusion Magnetic Resonance Imaging}\label{sec:overview}

We begin with a discussion outlining relevant background on diffusion MRI. Denote the imaging domain $\Omega\subset\mathbb{R}^{D}$, $D\in\{2,3\}$. The ensemble average diffusion propagator (EAP) at $\boldsymbol{v}\in\Omega$, denoted $E_{\boldsymbol{v}}$, is a probability density function describing the distribution of the average displacement of proton spins during a given experimental diffusion time. The ODF at $\boldsymbol{v}$, denoted $g_{\boldsymbol{v}}$, is explicitly defined as the radial integration of the EAP, given by:
\begin{equation}\label{eqn:EAP_ODF}
    g_{\boldsymbol{v}}(\boldsymbol{p}) = \int_{0}^\infty E_{\boldsymbol{v}}(r\boldsymbol{p})r^2dr \quad \boldsymbol{p}\in\mathbb{S}^2.
\end{equation}
The MRI signals can be sensitized to these local diffusions using appropriate acquisition sequences. In particular, the \textit{diffusion signal attenuation}, referred to from here on as the diffusion signal and denoted as $f_{\boldsymbol{v}}(\boldsymbol{p}, b)$, collected using Pulsed Gradient Spin-Echo sequences, shares a Fourier relationship with the EAP: 
\begin{equation}\label{eqn:EAP}
    f_{\boldsymbol{v}}(\boldsymbol{p}, b) = \int_{\mathbb{R}^{3}} E_{\boldsymbol{v}}(\boldsymbol{r})\text{exp}(2\pi i b\boldsymbol{p}^{\intercal}\boldsymbol{r})d\boldsymbol{r},
\end{equation}
under the narrow pulse approximation on the magnetic gradients \citep{ombao2016}. Notice that the diffusion signal is parameterized by two experimentally controlled acquisition parameters: the direction of the applied the magnetic field gradient, which can be represented as a point on $\boldsymbol{p}\in\mathbb{S}^2$, and a non-negative composite scalar parameter $b\in\mathbb{R}^{+}$, referred to as the $b$-value, which is related to the strength, duration and timing of the applied magnetic gradients, along with the proton gyromagnetic ratio.
Equations~\eqref{eqn:EAP} and \eqref{eqn:EAP_ODF} suggest a seemingly natural approach to ODF estimation via a two stage procedure: first estimating the EAP from the discrete Fourier transform applied to a lattice sampling of diffusion signals and then computing the numerical integration in the radial direction. This is referred to as diffusion spectrum imaging (DSI). Unfortunately, DSI requires dense Cartesian sampling over a 3D grid of $b \; \text{and} \; \boldsymbol{p}$, resulting in long scanning times.
\par
Alternatively, for a fixed non-zero $b^{*}$, \cite{tuch2004} show that the ODF is approximately equal to the Funk-Radon transform of the diffusion signal over a fixed radial $b^{*}$-shell:
\begin{equation}\label{eqn:FRT}
    g_{\boldsymbol{v}}(\boldsymbol{p}) \approx \int_{\mathbb{S}^2}\delta(\boldsymbol{p}^{\intercal}\boldsymbol{u})f_{\boldsymbol{v}}(\boldsymbol{u},b^{*})d\boldsymbol{u}.
\end{equation}
Using the Funk-Hecke theorem, it can be shown that the \textit{real-symmetric spherical harmonic functions}, denoted here as $\{\phi_0, \phi_1, ..., \}$, are non-zero eigenfunctions of the Funk-Radon transform. These functions are defined according to:
\begin{equation}\label{eqn:real_sph_harm}
    \phi_j = 
      \begin{cases}
      \sqrt{2}\text{Re}(Y_k^m) & -k\le m < 0\\
      Y_k^0 & m=0\\
      \sqrt{2}\text{Img}(Y_k^m) & 0 < m \le k
    \end{cases} \quad k = 0,2,4,...,l;
\end{equation}
where 
$$
    Y_l^m(\alpha_1, \alpha_2) = \sqrt{\frac{(2l+1)(l-m)!}{4\pi (l+m)!}}P_l^m(cos(\alpha_1))e^{im\alpha_2} \ ,
$$
are the spherical harmonics, which form a complete orthogonal basis system for $L^2(\mathbb{S}^2)$,
under spherical parameterization with polar angle $\alpha_1\in [0, \pi]$ and azimuthal angle $\alpha_2\in [0, 2\pi]$. $P_m^l$ are the Legendre polynomials with order indices $l = 0, 1, ..., $ and phase factors $m = -l, ..., 0, ..., l$ and the indices of the $\{\phi_j\}$ are defined by $j = (k^2+k+2)/2 + m-1$. The associated eigenvalue of $\phi_j$ is given by $2\pi P_{l_{j}}(0)$, with $l_j$ indicating the order of the harmonic $\phi_j$. The $\{\phi_j\}$ then form a basis for the set symmetric spherical functions:
$$
    \mathcal{H}:=\{h\in L^2(\mathbb{S}^2): h(\boldsymbol{p}) = h(-\boldsymbol{p})\}.
$$ 
Since the diffusion signals are antipodally symmetric and the Funk-Radon transform annihilates all odd functions, it follows that $g_{\boldsymbol{v}},f_{\boldsymbol{v}}\in\mathcal{H}$ and hence both can be represented using an expansion over the real-symmetric spherical harmonic basis. If we further assume the elements in $\mathcal{H}$ to be continuous, the Funk-Radon transform has an inverse \citep{quellmalz2020}, denoted here as $\mathcal{G}$, and it follows that the signal is related to the ODF via 
$f_{\boldsymbol{v}} := \mathcal{G}\left[g_{\boldsymbol{v}}\right]$. Coupling the aforementioned spectral properties of the Funk-Radon transform with the fact that any element in $\mathcal{H}$ can be represented using an expansion over the real-symmetric spherical harmonic basis, it follows that representing $g_{\boldsymbol{v}}$ and $f_{\boldsymbol{v}}$ using some finite expansion of the $\phi_j$'s allows for the derivation of a simple diagonal linear map between the two functions (see Section~\ref{apx:theory} of the supplement). This approach avoids the costly lattice sampling designs and potentially unstable numerical integration and has thereby become the preferred approach in practice for single shell data \citep{descoteaux2007}.

\section{A Model for the Continuous Orientation Density Field}\label{sec:models}
In this section, we outline our modeling framework. We begin in Section~\ref{ssec:latent_density_model} by formulating a probabilistic model for the latent ODF field. In Section~\ref{ssec:param_model}, we propose a continuous field parameterization using a deep neural network and in Section~\ref{ssec:statistical_model} we outline the statistical model for the observed data. 

\subsection{Model for the Latent Field}\label{ssec:latent_density_model}
Let $(\Sigma, \mathcal{B}(\Sigma), \mathbb{P})$ be a probability space. For a given spatial location $\boldsymbol{v}\in\Omega$, we model the latent ODF as \textit{random function} $g_{\boldsymbol{v}}:\Sigma\mapsto\mathcal{H}$, i.e. a random variable with realizations in $\mathcal{H}$ under the standard $L^2(\mathbb{S}^2)$ inner product, denoted as $\langle, \rangle_{\mathcal{H}}$, which is a Hilbert space. Due to the underlying biophysics, a complex spatial covariance structure exists \textit{between} the functions observed at different locations $\boldsymbol{v}$. Such correlation can be accounted for by modeling the process as a function-valued random field indexed by $\boldsymbol{v}$: $\{g_{\boldsymbol{v}}: \boldsymbol{v}\in\Omega\}$
that is, a random field such that each $g_{\boldsymbol{v}}:=g(\boldsymbol{v},\cdot):\Sigma\mapsto \mathcal{H}$ is a random function \citep{menafoglio2013,martinezhernadez2020}. 
\par 
The mean and covariance function of the process $g_{\boldsymbol{v}}$ are defined as:
\begin{equation}\label{eqn:angular_random_function}
\begin{aligned}
    \mu({\boldsymbol{v}},\boldsymbol{p)} &= \mathbb{E}[g_{\boldsymbol{v}}(\boldsymbol{p})] := \int_{\Sigma} g_{\boldsymbol{v}}(\sigma, \boldsymbol{p})\mathbb{P}(d\sigma) \\
    C_{\mathcal{H}}(\boldsymbol{v},\boldsymbol{p}_1,\boldsymbol{p}_2) &= \mathbb{E}[(g(\boldsymbol{v}, \boldsymbol{p}_1) - \mu(\boldsymbol{v},\boldsymbol{p}_1))(g(\boldsymbol{v},\boldsymbol{p}_2) - \mu(\boldsymbol{v},\boldsymbol{p}_2))] \quad\boldsymbol{p}_1,\boldsymbol{p}_2\in \mathbb{S}^2,
    \end{aligned}
\end{equation}
where the moments can be rigorously defined using the Bochner integral \citep{hsing2015}. We encode our \emph{a-priori} assumptions on the angular properties of $g_{\boldsymbol{v}}$ by specifying a form for the parameters $\mu$ and $C_{\mathcal{H}}$ under a Gaussian process model \citep{rasmussen2005}. We adopt a constant mean model: $\mu(\boldsymbol{v},\boldsymbol{p}) := \mu(\boldsymbol{v})$, which is interpreted as an isotropic diffusion field with scale $\mu(\boldsymbol{v})$. To model the second order behavior, we assume a stationary and isotropic (rotationally invariant) prior angular covariance model, implying that the covariance function $C_{\mathcal{H}}$ along any two directions is a function purely of the angle between them, which is a natural prior model in the absence of additional spatial context \citep{andersson2015}. The following proposition characterizes a class of such stationary and isotropic covariance functions which respect the antipodal symmetry of $\mathcal{H}$.
\begin{proposition}\label{prop:eigen_analysis_S2}
    Assume the covariance function $C_{\mathcal{H}}$ is rotationally invariant and that realizations of $g \sim g\in\mathcal{H}$ w.p.1. Then the correlation function is given by
    \begin{equation}\label{eqn:zonal_mercer_kernel}
        \text{Cor}(g_{\boldsymbol{v}}(\boldsymbol{p}_1),g_{\boldsymbol{v}}(\boldsymbol{p}_2)) = \sum_{k=1}^\infty \text{s}_{\gamma}(\sqrt{l_k(l_{k}+1)})\phi_k(\boldsymbol{p}_1)\phi_k(\boldsymbol{p}_2)
    \end{equation}
    where $\text{s}_{\gamma}$ is the spectral density function  of the kernel of $C_{\mathcal{H}}$ with parameters $\gamma$.
\end{proposition}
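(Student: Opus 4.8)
The plan is to derive~\eqref{eqn:zonal_mercer_kernel} by combining Mercer's theorem for the (continuous) covariance kernel with the representation theory of $SO(3)$ acting on $L^{2}(\mathbb{S}^{2})$, and then to encode the antipodal constraint through the support of the process. Fix $\boldsymbol{v}$ and let $\mathcal{C}$ be the integral operator on $L^{2}(\mathbb{S}^{2})$ with kernel $C_{\mathcal{H}}(\boldsymbol{v},\cdot,\cdot)$. First I would record the basic structure: since $g_{\boldsymbol{v}}$ is a mean-square continuous second-order process taking values in $\mathcal{H}\subset L^{2}(\mathbb{S}^{2})$ and (assuming, as elsewhere, that the kernel is continuous) $\mathcal{C}$ is self-adjoint, positive semidefinite and trace class, Mercer's theorem gives a uniformly convergent expansion $C_{\mathcal{H}}(\boldsymbol{v},\boldsymbol{p}_{1},\boldsymbol{p}_{2})=\sum_{j}\lambda_{j}\psi_{j}(\boldsymbol{p}_{1})\psi_{j}(\boldsymbol{p}_{2})$ with orthonormal eigenfunctions $\psi_{j}$ and $\lambda_{j}\ge 0$, $\sum_{j}\lambda_{j}<\infty$.

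The crux is identifying the eigenspaces. Rotational invariance of $C_{\mathcal{H}}$ says exactly that $\mathcal{C}$ commutes with the unitary $SO(3)$-action $(R\cdot h)(\boldsymbol{p})=h(R^{-1}\boldsymbol{p})$. Because $L^{2}(\mathbb{S}^{2})=\bigoplus_{l\ge 0}\mathcal{H}_{l}$ is the decomposition into the pairwise-inequivalent irreducible subrepresentations $\mathcal{H}_{l}$ (degree-$l$ spherical harmonics, dimension $2l+1$), Schur's lemma forces $\mathcal{C}$ to act on each $\mathcal{H}_{l}$ as a scalar $a_{l}\ge 0$ with $\sum_{l}(2l+1)a_{l}<\infty$; equivalently this is the Funk--Hecke/Schoenberg statement that a rotationally invariant kernel is zonal with nonnegative Gegenbauer coefficients, a spectral fact already used in Section~\ref{sec:overview}. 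Applying the addition theorem to each degree projection then yields $C_{\mathcal{H}}(\boldsymbol{v},\boldsymbol{p}_{1},\boldsymbol{p}_{2})=\sum_{l}a_{l}\sum_{m}\varphi_{l,m}(\boldsymbol{p}_{1})\varphi_{l,m}(\boldsymbol{p}_{2})$, i.e.\ the Mercer eigenfunctions may be taken to be the real spherical harmonics with eigenvalue constant on each degree.

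Next comes the antipodal reduction: since realizations of $g_{\boldsymbol{v}}$ lie in $\mathcal{H}$ almost surely, the process is supported on the even subspace, so $\mathcal{C}$ vanishes on $\mathcal{H}^{\perp}=\bigoplus_{l\ \text{odd}}\mathcal{H}_{l}$, forcing $a_{l}=0$ for odd $l$; the surviving harmonics are precisely the real-symmetric basis $\{\phi_{k}\}$ of~\eqref{eqn:real_sph_harm}, each $\phi_{k}$ of some even order $l_{k}$ with eigenvalue $a_{l_{k}}$. Passing from covariance to correlation divides by the variance $\tau^{2}(\boldsymbol{v})=C_{\mathcal{H}}(\boldsymbol{v},\boldsymbol{p},\boldsymbol{p})=\sum_{k}a_{l_{k}}\phi_{k}(\boldsymbol{p})^{2}$, which is independent of $\boldsymbol{p}$ by the addition theorem; the order-$0$ term is absent because an ODF is a density, so $\int_{\mathbb{S}^{2}}g_{\boldsymbol{v}}=1$ pins its $\phi_{0}$-coefficient deterministically ($a_{0}=0$), leaving $k\ge 1$. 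Finally, since $l(l+1)$ is the eigenvalue of $-\Delta_{\mathbb{S}^{2}}$ on $\mathcal{H}_{l}$, the covariance operator is a function of the Laplace--Beltrami operator, and the normalized degree-$l$ eigenvalue $a_{l_{k}}/\tau^{2}(\boldsymbol{v})$ is written as $\text{s}_{\gamma}(\sqrt{l_{k}(l_{k}+1)})$, the kernel's spectral density evaluated at the frequency $\sqrt{l_{k}(l_{k}+1)}$; substituting gives~\eqref{eqn:zonal_mercer_kernel}.

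I expect the only step needing genuine machinery to be the eigenspace identification (Schur's lemma / Funk--Hecke), together with the bookkeeping of re-indexing the abstract degree-$l$ spaces against the enumeration $\{\phi_{k}\}$ and checking that the rearranged series still converges uniformly (again by Mercer). The remaining subtlety is interpretive rather than technical: ``$\text{s}_{\gamma}$'' is not derived but is the chosen parametric spectral form (for instance a Mat\'ern-type family on $\mathbb{S}^{2}$, with $\text{s}_{\gamma}$ a negative power of $\kappa^{2}+\omega^{2}$), so the proposition is best read as asserting that \emph{every} rotationally invariant, antipodally symmetric correlation on $\mathcal{H}$ admits the displayed diagonal form with some nonnegative summable coefficient sequence $\{a_{l_{k}}\}$, which the spectral density then parameterizes.
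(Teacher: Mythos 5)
Your proposal is correct and follows essentially the same route as the paper's proof: diagonalization of the rotationally invariant kernel over the spherical harmonics, identification of the eigenvalues with the spectral density evaluated at $\sqrt{l(l+1)}$ (the paper cites Solin and S\"{a}rkk\"{a} for this step rather than deriving it via Schur's lemma as you do), and annihilation of the odd-degree harmonics by antipodal symmetry. Your Mercer/Schur argument is in fact slightly more careful than the paper's appeal to commutation with $\Delta_{\mathbb{S}^2}$ (which alone only forces invariance of each degree-$l$ eigenspace, not scalarity on it), but this is a refinement of the same proof, not a different one.
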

\noindent{Proposition~\ref{prop:eigen_analysis_S2} is useful from a computational perspective, as it diagonalizes the covariance kernels of interest over the real-symmetric harmonics and facilities a principled manner of constraining the number of parameters determining $C_{\mathcal{H}}$. Specifically, we use the spherical Mat\'{e}rn family \citep{guinness2016} to define the angular prior covariance function, the spectral density of which is given by 
$$
\text{s}_{\gamma}(\omega) = \frac{2^{3}\pi^{3/2}\Gamma(\nu + \frac{3}{2})(2\nu)^\nu}{\Gamma(\nu)\rho^{2\nu}}\left(\frac{2\nu}{\rho^2} + 4\pi^2\omega^2\right)^{-(\nu + \frac{3}{2})}\quad \gamma = (\nu, \rho),
$$
with parameters $\nu$, which controls the smoothness, and $\rho$, dictating the length-scale of correlation, and $\Gamma$ is the gamma function.}
\par 
The global spatial dependence structure between any pair of locations $\boldsymbol{v}_1,\boldsymbol{v}_2\in\Omega$ is determined by the spatial covariance function of the process, defined by:
\begin{equation}\label{eqn:statial_covariance}
    C_{\Omega}(\boldsymbol{v}_1, \boldsymbol{v}_2) = \mathbb{E}[\langle g(\boldsymbol{v}_1,\cdot) - \mu(\boldsymbol{v}_1,\cdot), g(\boldsymbol{v}_2,\cdot) - \mu(\boldsymbol{v}_2,\cdot)\rangle_{\mathcal{H}}].
\end{equation}
As $C_{\Omega}$ is a positive symmetric function over a $2(D)$-dimensional domain, for tractability of parameter estimation, i.e. to avoid the curse of dimensionality, one approach is to invoke prior assumptions of stationarity and isotropy to reduce the dimensionality of the parameter space defining $C_{\Omega}$ and estimate it directly. Such assumptions imply a spatial correlation structure that is constant on spherical contours, which is violated for datasets exhibiting complex anisotropic dependence structures, e.g. of the type that would be encountered along  boundaries between white matter bundles in diffusion MRI. 
\par 
Alternatively, consider the random series decomposition of spatial fields:
\begin{equation}\label{eqn:field_lin_model}
\begin{aligned}
    g_{\boldsymbol{v}}(\boldsymbol{p}) &:= g(\boldsymbol{v},\boldsymbol{p}) = \mu(\boldsymbol{v}) + \sum_{k=1}^\infty c_k(\boldsymbol{v})\phi_k(\boldsymbol{p}) 
    \approx\mu(\boldsymbol{v}) + \sum_{k=1}^K c_k(\boldsymbol{v})\phi_k(\boldsymbol{p}),
\end{aligned}
\end{equation}
where $c_k(\boldsymbol{v})$ are random functions with realizations in a suitable function space over $\Omega$. The rank $K$ approximation assumes that the contribution of all high-frequency harmonic basis functions above a certain degree are negligible. The mean function $\mu$ can be considered as the coefficient field for the harmonic $\phi_0$, the constant function on $\mathbb{S}^2$. Intuitively, model \eqref{eqn:field_lin_model} can be interpreted as a spatial linear mixed-effects model,
which decomposes the field into an angularly constant mean function $\mu(\boldsymbol{v})$ (isotropic part)
and K spatially dependent random coefficient field deviation functions $\boldsymbol{c}(\boldsymbol{v})$ (anisotropic part).
\par 
The vector-valued random function $\boldsymbol{c}(\boldsymbol{v}) := (c_1(\boldsymbol{v}),...,c_{K}(\boldsymbol{v}))^\intercal$ determines the spatial distribution of anisotropy in the diffusion field. Using the orthonormality of the harmonics, it is easy to show that the spatial covariance is implicitly defined by the moments of the coefficients of the basis expansion \eqref{eqn:field_lin_model}:
$C_{\Omega}(\boldsymbol{v}_1, \boldsymbol{v}_2) = \sum_{k=1}^K\mathbb{E}[c_k(\boldsymbol{v}_1)c_k(\boldsymbol{v}_2)]$. 
Hence, if we are able to approximate the distribution of the coefficient fields, we may avoid either direct flexible modeling of $C_{\Omega}$ (and the computational issues that come along with it) or the imposition of unrealistic modeling assumptions, and instead translate the burden of flexible process modeling to the inductive bias of our model for the coefficient field. In the following Section, we propose a flexible model for the coefficient fields using deep-basis functions. 

\subsection{A Parameterization using Deep Basis Function}\label{ssec:param_model}
We are interested in modeling the random coefficient field functions $c_k(\boldsymbol{v})$. As the $c_{k}$'s are infinite dimensional random variables, we require some discretization for tractable computation. For this purpose, we propose to use another layer of basis expansion, now in the spatial domain. Specifically, let $\boldsymbol{\xi}_{\boldsymbol{\theta}}:\Omega \mapsto \mathbb{R}^{r}$ be a rank $r$-basis system, depending on a parameter $\boldsymbol{\theta}\in\Theta\subset\mathbb{R}^{p}$. We model the spatial coefficient field via multivariate basis expansion
\begin{equation}\label{eqn:last_layer_expansion}
    \boldsymbol{c}(\boldsymbol{v}) := \boldsymbol{W}\boldsymbol{\xi}_{\boldsymbol{\theta}}(\boldsymbol{v}), \text{ where } \boldsymbol{W}\in\mathbb{R}^{K\times r}.
\end{equation}
In contrast to the appealing properties that lead to the choice of the harmonic basis system for the angular model, there is no such immediately obvious choice for $\boldsymbol{\xi}_{\boldsymbol{\theta}}$. The basis should be flexible enough to handle both spatially smooth regions as well as adequately approximate sharp discontinuities corresponding to tissue boundaries, along with being well defined over arbitrary subvolumes of $\Omega$. Often used basis systems for multidimensional function representation such as tensor-product splines and finite element basis functions are problematic for our situation, as the former implies that $\Omega$ naturally decomposes into a product space, while the approximation performance of the latter is heavily dependent on the selection of a triangulation of the domain $\Omega$, which is itself a difficult problem 
\citep{lai2007}.
\par 
Instead, we proposed a data-driven approach in which we parameterize $\boldsymbol{\xi}_{\boldsymbol{\theta}}$ as an $L$-layer neural network, whose parameters $\boldsymbol{\theta}$ are to be estimated from the observed data. Such neural networks that directly parameterize continuous functions have been collectively referred to as neural fields (NF) or implicit neural representations (INR) \citep{xie2022}. The proposed neural network basis system is defined according to a multi-layer perception (MLP) architecture, which has the general form:
\begin{equation}\label{eqn:inr_spatial_field}
    \begin{aligned}
        \boldsymbol{x}^{(0)} &= \alpha(\boldsymbol{v}) \\
        \boldsymbol{x}^{(l)} &= \rho^{(l)}(\boldsymbol{W}^{(l)}\boldsymbol{x}^{(l-1)} + \boldsymbol{b}^{(l)})\quad l=1,...,L \\
        \boldsymbol{\xi}(\boldsymbol{v}) &= \boldsymbol{x}^{(L)} 
    \end{aligned}
\end{equation}
where $\boldsymbol{x}^{(l)}\in\mathbb{R}^{i_{l}}$ is the output from the $l-1$'st layer, $\rho^{(l)}$ is the $l$'th activation function and $\alpha(\boldsymbol{v}) = (\alpha_1(\boldsymbol{v}), ..., \alpha_{d_{0}}(\boldsymbol{v}))$ is an initial $d_0$-dimensional mapping. Note that by construction, $i_{L}=r$. Denote the parameter $\boldsymbol{\theta}:=(\boldsymbol{W}^{(L)}, ..., \boldsymbol{W}^{(1)}, \boldsymbol{b}^{(L)}, ..., \boldsymbol{b}^{(1)})$, i.e. the weights and biases of the network. 
\par  
We take the first layer mapping $\alpha$ to be a sufficiently high-frequency Fourier expansion of the form $\alpha(\boldsymbol{v}) = \text{sin}(\boldsymbol{W}_0\boldsymbol{v} + \boldsymbol{b}_0)$ for some random frequencies 
$\boldsymbol{W}_0\in\mathbb{R}^{d_{0}\times D}$ and phases $\boldsymbol{b}_0\in\mathbb{R}^{d_{0}}$. Such
NFs have been shown empirically to avoid the spectral bias of traditional coordinate-based MLPs \citep{sitzmann2020,tancik2020,mildenhall2021},  i.e. the tendency of learning overly smooth function fits \citep{rahaman2019}. From the perspective of approximation theory, it has recently been shown that NFs with first layer sinusoidal basis functions and continuous activations construct a representation space that is equivalent to a dictionary of sinusoidal functions, where the number of dictionary elements is exponential in the network depth \citep{fathony2021,gizem2022}. This allows relatively shallow architectures to define very high-dimensional function spaces, and thus are an attractive option for flexible modeling of the coefficient field function $\boldsymbol{c}(\boldsymbol{v})$.
\par 
To allow for dependence structure with the fields $\boldsymbol{c}(\boldsymbol{v})$, we model the angularly constant mean field by adding another fully connected output dimension to the network \eqref{eqn:inr_spatial_field}, i.e. $\mu(\boldsymbol{v})=  \boldsymbol{\mu}^{\intercal}\boldsymbol{\xi}_{\boldsymbol{\theta}}(\boldsymbol{v})$, for $\boldsymbol{\mu}\in\mathbb{R}^{r}$. Putting this all together, we have the following parameterized form of \eqref{eqn:field_lin_model}:
\begin{equation}\label{eqn:lin_field_model_params}
    g(\boldsymbol{v},\boldsymbol{p}) = \underbrace{\boldsymbol{\mu}^{\intercal}\boldsymbol{\xi}_{\boldsymbol{\theta}}(\boldsymbol{v})}_{\mu(\boldsymbol{v})} + \underbrace{\boldsymbol{\xi}^{\intercal}_{\boldsymbol{\theta}}(\boldsymbol{v})\boldsymbol{W}^{\intercal}}_{\boldsymbol{c}^{\intercal}(\boldsymbol{v})}\boldsymbol{\phi}(\boldsymbol{p}).
\end{equation} 
\subsection{A Statistical Model for the Observed Data}\label{ssec:statistical_model}
Denote the set $\boldsymbol{V}\subset\Omega$, where $\boldsymbol{v}_1,...,\boldsymbol{v}_{N}\in\boldsymbol{V}$ are the coordinates of $N$ observation locations. For each $i=1,...,N$, we observe noisy signals $\boldsymbol{y}_{i}=(y_{i,1,b_{1}},...,
y_{i,M,b_{M}})^\intercal$ along some common set of angular locations $\boldsymbol{P}_{M}=(\boldsymbol{p}_1,...,\boldsymbol{p}_{M})^\intercal$. From here on we assume that $b_m$ are constant for all $m=1,...,M$ and drop the subscript notation for clarity. From the discussion in Section~\ref{sec:overview}, we know that the true (noiseless) signals are related to the latent orientation density field via the inverse Funk Radon transformation $\mathcal{G}$. We adopt a Gaussian noise model for the measurement error, yielding 
\begin{equation}\label{eqn:stat_model_observed_data}
    \begin{aligned}
    &p(\boldsymbol{y}_1, ..., \boldsymbol{y}_{N}| g, \boldsymbol{V}, \boldsymbol{P}_{M}) = \prod_{i=1}^{N}p(\boldsymbol{y}_{i} | g_{\boldsymbol{v}_{i}},\boldsymbol{P}_{M}), \quad \boldsymbol{y}_{i} | g_{\boldsymbol{v}_{i}},\boldsymbol{P}_{M} \sim  \mathcal{N}\left(\begin{pmatrix}
        \mathcal{G}[g_{\boldsymbol{v}_{i}}](\boldsymbol{p}_1) \\
        \vdots \\
        \mathcal{G}[g_{\boldsymbol{v}_{i}}](\boldsymbol{p}_M)  
    \end{pmatrix}, \sigma_{e}^2\boldsymbol{I}_{M}\right) \\
    \end{aligned}
\end{equation}
for $\sigma_{e}^2>0$ and $\mathcal{N}$ denotes the normal distribution.
\par 
Restricted to the space spanned by the rank $K$ truncation of the harmonics, $\mathcal{G}$ can be represented as a diagonal matrix: $\boldsymbol{G}\in\text{diag}(\mathbb{R}^{K})$ with $k$'th diagonal element $[2\pi P_{l_{k}}(0)]^{-1}$. Denote $\boldsymbol{\Phi}\in\mathbb{R}^{M\times K}$, where $\boldsymbol{\Phi}_{mk} = \phi_k(\boldsymbol{p}_m)$ and define the matrix $\boldsymbol{\Phi}_{G}:=\boldsymbol{\Phi}\boldsymbol{G}$. Coupling the statistical model for the observed data \eqref{eqn:stat_model_observed_data} and the parameterization \eqref{eqn:lin_field_model_params}, we have the per-voxel likelihood 
$$
\boldsymbol{y}_i | \boldsymbol{v}_i, \boldsymbol{W}, \boldsymbol{\theta}, \boldsymbol{\mu}, \sigma_{e}^2 \sim \mathcal{N}(\boldsymbol{\mu}^{\intercal}\boldsymbol{\xi}_{\boldsymbol{\theta}}(\boldsymbol{v}_i)\boldsymbol{1}_{M} + \boldsymbol{\Phi}_{G}\boldsymbol{W}\boldsymbol{\xi}_{\boldsymbol{\theta}}(\boldsymbol{v}_i), \sigma_{e}^2\boldsymbol{I}),
$$
where $\boldsymbol{1}_{M}\in\mathbb{R}^{M}$ is the column-vector of ones. 
Denote $\boldsymbol{Y} = [\boldsymbol{y}_1^{\intercal}, ..., \boldsymbol{y}_{N}^{\intercal}] \in \mathbb{R}^{M\times N}$ and $\boldsymbol{\Xi}_{\boldsymbol{\theta}} = [\boldsymbol{\xi}_{\boldsymbol{\theta}}^{\intercal}(\boldsymbol{v}_1),...,\boldsymbol{\xi}_{\boldsymbol{\theta}}^{\intercal}(\boldsymbol{v}_N)]^{\intercal}\in\mathbb{R}^{r\times N}$, 
the complete-data likelihood can be written 
\begin{equation}\label{eqn:data_likelihood}
\begin{aligned}
    &\boldsymbol{Y}|\boldsymbol{V}, \boldsymbol{W}, \boldsymbol{\theta}, \boldsymbol{\mu}, \sigma_e^2 \sim \mathcal{MN}_{M\times N}(((\boldsymbol{1}_{M}\boldsymbol{\mu}^{\intercal} + \boldsymbol{\Phi}_{G}\boldsymbol{W})\boldsymbol{\Xi}_{\boldsymbol{\theta}}, \sigma_{e}^2\boldsymbol{I}_M, \boldsymbol{I}_N),\\
\end{aligned}
\end{equation}
 where $\mathcal{MN}$ denotes the matrix normal distribution. 
\begin{remark}
It may be of interest to perform estimation and inference on the signal function directly $\mathcal{G}[g]$. This can be easily accommodated in our framework by taking $\boldsymbol{G}=\boldsymbol{I}_{K}$, the identity matrix in $\mathbb{R}^{K}$, and re-interpreting \eqref{eqn:lin_field_model_params} in terms of the diffusion signal function field. 
\end{remark}

\section{Statistical Inference}\label{sec:stat_inference}

Estimation and uncertainty quantification can be accomplished by forming the posterior distribution of the field estimates
$p(g_{\boldsymbol{v}}|\boldsymbol{Y},\boldsymbol{V})$. Unfortunately, full posterior inference is computationally intractable, mainly due to the enormous size of the network parameters $\boldsymbol{\theta}$. To avoid this computational bottleneck, we develop a fast approximate inference procedure, \newtext{wherein we first derive the analytic form of a conditional posterior of interest and subsequently form point-estimates of the remaining unknown conditioning parameters, which are then plugged into the derived analytic form for inference.} 
\subsection{Estimating the Conditional Field Posterior}\label{ssec:conditional_posterior} 
\newtext{To model the uncertainty in the mean predictions, we assume the form of the conditional predictive distribution of the mean function is given by: $\mu(\boldsymbol{v}) |\boldsymbol{\theta}, \boldsymbol{\mu}, \sigma_\mu^2 \sim \mathcal{N}(\boldsymbol{\mu}^{\intercal}\boldsymbol{\xi}_{\boldsymbol{\theta}}(\boldsymbol{v}), \sigma_\mu^2)$, for unknown spatially constant variance $\sigma_\mu^2>0$, and assume conditional independence between the fields: $\mu(\boldsymbol{v})\perp \boldsymbol{c}(\boldsymbol{v}) | \boldsymbol{\theta}$. Under these assumptions and the model formulated in Section~\ref{sec:models}, it can be shown that the predictive posterior of interest has the form:
\begin{equation}\label{eqn:function_space_posterior}
    \begin{aligned}
        g(\boldsymbol{v},\cdot)| \boldsymbol{V}, \boldsymbol{Y}, \boldsymbol{\theta}, \boldsymbol{\mu}, \gamma, \sigma_w^2,  \sigma_e^2, \sigma_\mu^2 \sim \mathcal{GP}\big(&\boldsymbol{\xi}_{\boldsymbol{\theta}}^{\intercal}(\boldsymbol{v})\boldsymbol{\mu} + \boldsymbol{\phi}^{\intercal}(\boldsymbol{p})\frac{1}{\sigma_{e}^2}[\boldsymbol{\xi}_{\boldsymbol{\theta}}^{\intercal}(\boldsymbol{v})\otimes\boldsymbol{I}_{K}]\boldsymbol{\Lambda}_{\boldsymbol{\theta}}^{-1}[\boldsymbol{\Xi}_{\boldsymbol{\theta}}^{\intercal}\otimes\boldsymbol{\Phi}_{G}]^{\intercal}\text{vec}(\boldsymbol{Y}^{(c)}), \\ &\sigma_{\mu}^2 + \boldsymbol{\phi}^{\intercal}(\boldsymbol{p}_1)[\boldsymbol{\xi}_{\boldsymbol{\theta}}^{\intercal}(\boldsymbol{v})\otimes\boldsymbol{I}_{K}]\boldsymbol{\Lambda}_{\boldsymbol{\theta}}^{-1}[\boldsymbol{\xi}_{\boldsymbol{\theta}}^{\intercal}(\boldsymbol{v})\otimes\boldsymbol{I}_{K}]^{\intercal}\boldsymbol{\phi}(\boldsymbol{p}_2)\big),
    \end{aligned}
\end{equation}
where $\boldsymbol{Y}^{(c)}:=\boldsymbol{Y}-\boldsymbol{1}_{M}\boldsymbol{\mu}^{\intercal}\boldsymbol{\Xi}_{\boldsymbol{\theta}}$, $\sigma_w^2>0$ is a prior angular variance parameter, 
$$
\begin{aligned}
        \boldsymbol{\Lambda}_{\boldsymbol{\theta}} &= \frac{1}{\sigma^2_{e}}(\frac{\sigma^2_{e}}{\sigma^2_{w}}\boldsymbol{I}_{r}\otimes\boldsymbol{R}_{\gamma} +  \boldsymbol{\Xi}_{\boldsymbol{\theta}}\boldsymbol{\Xi}_{\boldsymbol{\theta}}^{\intercal}\otimes\boldsymbol{\Phi}_{G}^{\intercal}\boldsymbol{\Phi}_{G}); \text{ } \boldsymbol{R}_{\gamma}^{-1} := \text{Diag}(\text{s}_{\gamma}(\sqrt{l_1(l_{1}+1)}), ...,\text{s}_{\gamma}(\sqrt{l_K(l_{K}+1)})),
\end{aligned}
$$
and $\otimes$ denotes the Kronecker product. Please visit supplemental Section~\ref{apx:theory} for a derivation of this result.} 
\par 
\newtext{Equation~\eqref{eqn:function_space_posterior} is an analytic function of the high-dimensional conditioning parameters $\boldsymbol{\mu},\boldsymbol{\theta}$, along with the variance parameters $\sigma_e^2,\sigma_w^2,\sigma^2_{\mu}$ and angular correlation parameters $\gamma$, which are unknown in practice and must be estimated. Algorithm~\ref{alg:inferece_algo} summarizes our general procedure for obtaining these estimates. For the remainder of this Section, we elaborate on each of the steps.
}
\begin{algorithm}[t]
  \caption{Inference procedure}
  \label{alg:inferece_algo}
  \begin{algorithmic}[1]
    \State Partition data into disjoint sets 
$\{\boldsymbol{Y}, \boldsymbol{V}\} = \{\boldsymbol{Y}_{calib}, \boldsymbol{V}_{calib}\}\bigcup\{\boldsymbol{Y}_{train}, \boldsymbol{V}_{train}\}$
\State Select hyperparameters using Algorithm~\ref{alg:BO_hyperparameter_optimization} (supplemental text) with $\{\boldsymbol{Y}_{train}, \boldsymbol{V}_{train}\}$
\State Form $\widehat{\boldsymbol{\theta}},\widehat{\boldsymbol{\mu}}$ via  \eqref{eqn:penalized_likelihood}, conditioned on selected hyperparameters, with $\{\boldsymbol{Y}_{train}, \boldsymbol{V}_{train}\}$
\State Estimate $\hat{\sigma}_e^2$ using \eqref{eqn:noise_estimator} and $\hat{\sigma}_{w}^2,\hat{\sigma}_{\mu}^2$ via \eqref{eqn:calibration_likelihood_optimization} with $\{\boldsymbol{Y}_{calib}, \boldsymbol{V}_{calib}\}$ 
\State Use point estimate estimates from steps 2-4 and full data $\{\boldsymbol{Y}, \boldsymbol{V}\}$ to condition \eqref{eqn:function_space_posterior}
\end{algorithmic}
\end{algorithm}
\bigskip\par
\newtext{\noindent{\textit{Estimating $\boldsymbol{\mu},\boldsymbol{\theta}$}}:} Estimating the hyperparameters of a Gaussian processes is typically done using some variant of marginal likelihood maximization \citep{rasmussen2005}. Unfortunately, gradient-based optimization of the resulting objective function is problematic for our situation, as it requires taking derivatives of the inverse of an $rK\times rK$ dimensional matrix with respect to the network parameters $\boldsymbol{\theta}$, resulting in a major computational bottleneck for large $r$ and $K$. \newtext{To circumvent this issue, we could instead maximize the likelihood in Equation~\eqref{eqn:data_likelihood} through stochastic gradient descent to obtain $\widehat{\boldsymbol{\mu}}, \widehat{\boldsymbol{W}},\widehat{\boldsymbol{\theta}}$, and then use the point estimates  $\widehat{\boldsymbol{\theta}},\widehat{\boldsymbol{\mu}}$ to condition the posterior of interest \eqref{eqn:function_space_posterior}. 
}
\par 
However, direct maximization of \eqref{eqn:data_likelihood} risks over-fitting the data,  learning basis functions which are too ``wiggly'', particularly in the high noise and/or sparse angular sample cases. To promote an appropriate notion of smoothness in the field, we want to shrink our field estimates toward the angular prior. Under our rank $K$ truncation, we can always define a functional analog of the Mahalanobis distance between any $h\in\mathcal{H}$ and the reduced rank prior via the semi-metric:
$$
\| h \|_{C_{K}}^2 = \frac{1}{\sigma_w^2}\sum_{k=1}^K\frac{\langle h, \phi_k\rangle_{\mathcal{H}}}{\text{s}_{\gamma}(\sqrt{l_k(l_{k}+1)})} \quad h\in\mathcal{H}
$$
\citep{galeano2015}. Notice that we do not penalize the coefficient associated with the constant harmonic $\phi_0$, i.e. the mean field, as not to introduce unnecessary bias, as this field is much easier to estimate than the higher order harmonic fields. Under our parametric model \eqref{eqn:lin_field_model_params}, it is straightforward to show that this penalty has the easily computable form
$$
\left\| \boldsymbol{\mu}^{\intercal}\boldsymbol{\xi}_{\boldsymbol{\theta}}(\boldsymbol{v}) + \boldsymbol{\phi}^{\intercal}(\boldsymbol{p})\boldsymbol{W}\boldsymbol{\xi}_{\boldsymbol{\theta}}(\boldsymbol{v})\right\|_{C_{K}}^2:=\frac{1}{\sigma_w^2}\boldsymbol{\xi}^{\intercal}_{\boldsymbol{\theta}}(\boldsymbol{v})\boldsymbol{W}^{\intercal}\boldsymbol{R}_{\gamma}\boldsymbol{W}\boldsymbol{\xi}_{\boldsymbol{\theta}}(\boldsymbol{v}),
$$
which can be integrated over the domain to form the global penalty for the field estimate. Putting this all together, we aim to maximize the penalized log maximum likelihood
\begin{equation}\label{eqn:penalized_likelihood}
   \max_{\boldsymbol{\mu}, \boldsymbol{W}, \boldsymbol{\theta}} -\left(\text{trace}((\boldsymbol{Y}-\boldsymbol{\Phi}_{G}\boldsymbol{W}\boldsymbol{\Xi}_{\boldsymbol{\theta}})^{\intercal}(\boldsymbol{Y}-\boldsymbol{\Phi}_{G}\boldsymbol{W}\boldsymbol{\Xi}_{\boldsymbol{\theta}})) + \lambda_c\int_{\Omega}\boldsymbol{\xi}^{\intercal}_{\boldsymbol{\theta}}(\boldsymbol{v})\boldsymbol{W}^{\intercal}\boldsymbol{R}_{\gamma}\boldsymbol{W}\boldsymbol{\xi}_{\boldsymbol{\theta}}(\boldsymbol{v})d\boldsymbol{v}\right)
\end{equation}
where $\lambda_c>0$ is a penalty parameter determining the strength of prior regularization.
\par 
\newtext{The solution to \eqref{eqn:penalized_likelihood} is heavily dependent on $\lambda_c$, which balances data fit and distance to the prior. Its optimal value is difficult to set \emph{a-priori}, due to its sensitivity to specific data characteristics, e.g. $M$, $\sigma_e^2$, etc. Therefore, we propose an automated selection algorithm, using Bayesian optimization (BO) \citep{snoek2012}, to select $\lambda_c$ and (optionally) $\gamma$. See Section~\ref{apx:BO_hyperparam_tuning} of the supplemental material for more details.}
\par\bigskip 
\noindent{\textit{Estimating $\sigma_e^2$}}: In most diffusion experiments, we have access to $p>2$ non-diffusion weighted ($b=0$) images, which we denote as $\{\boldsymbol{y}_{i}^{0}\}_{i=1}^N$. Plugging $b=0$ into Equation~\ref{eqn:EAP} and assuming measurement error is independent of $b$-value, it follows that $\boldsymbol{y}_{i}^{0} \sim \mathcal{N}(\boldsymbol{1}_p,\sigma_e^2\boldsymbol{I}_p)$, and hence an estimate of $\sigma_e^2$ can be formed as
\begin{equation}\label{eqn:noise_estimator}
    \hat{\sigma}_e^2 = \frac{1}{N}\sum_{i=1}^N \widehat{\text{Var}}(\boldsymbol{y}_{i}^{0}),  
\end{equation}
where $\widehat{\text{Var}}$ denotes the empirical variance. 
\bigskip\par 
\noindent{\textit{Estimating $\sigma_{w}^2,\sigma_{\mu}^2$}}: We estimate the final model parameters $\sigma_{w}^2,\sigma_{\mu}^2$ by maximizing the posterior predictive distribution on a held-out calibration set. Specifically, for any $\boldsymbol{v}$, denote $\boldsymbol{g}_{\boldsymbol{v}} = (g_{\boldsymbol{v}}(p_1), ..., g_{\boldsymbol{v}}(p_M))$, then the conditional posterior predictive distribution for a new noisy data $\boldsymbol{y}=(y_1,...,y_{M})$ can formed as
\begin{equation}\label{eqn:local_calibration_likelihood}
\begin{aligned}
    &p(\boldsymbol{y}|\boldsymbol{v},\boldsymbol{Y},\boldsymbol{V},\text{rest}) = \int \underbrace{p(\boldsymbol{y}|\boldsymbol{v}, \boldsymbol{g}_{\boldsymbol{v}}, \text{rest})}_{\text{Likelihood}}\underbrace{p(\boldsymbol{g}_{\boldsymbol{v}}|\boldsymbol{Y},\boldsymbol{V},\text{rest})}_{\text{Equation~\ref{eqn:function_space_posterior}}}d\boldsymbol{g}_{\boldsymbol{v}} \\
    &= \mathcal{N}_{M}\big(\boldsymbol{\xi}_{\boldsymbol{\theta}}^{\intercal}(\boldsymbol{v}_{calib})\boldsymbol{\mu}\boldsymbol{1}_{\boldsymbol{M}} +
\frac{1}{\sigma_{e}^2}\boldsymbol{\Phi}_{G}[\boldsymbol{\xi}_{\boldsymbol{\theta}}^{\intercal}(\boldsymbol{v}_{calib})\otimes\boldsymbol{I}_{K}]\boldsymbol{\Lambda}_{\boldsymbol{\theta}}^{-1}[\boldsymbol{\Xi}_{\boldsymbol{\theta}}^{\intercal}\otimes\boldsymbol{\Phi}_{G}]^{\intercal}\text{vec}(\boldsymbol{Y}^{(c)}_{train}), \\ &\qquad \sigma_{\mu}^2\boldsymbol{1}_{\boldsymbol{M}}\boldsymbol{1}_{\boldsymbol{M}}^{\intercal} + \boldsymbol{\Phi}_{G}[\boldsymbol{\xi}_{\boldsymbol{\theta}}^{\intercal}(\boldsymbol{v}_{calib})\otimes\boldsymbol{I}_{K}]\boldsymbol{\Lambda}_{\boldsymbol{\theta}}^{-1}[\boldsymbol{\xi}_{\boldsymbol{\theta}}^{\intercal}(\boldsymbol{v}_{calib})\otimes\boldsymbol{I}_{K}]^{\intercal}\boldsymbol{\Phi}_{G}^{\intercal}  +
\sigma_e^2\boldsymbol{I}_{\boldsymbol{M}}\big)
\end{aligned}
\end{equation}
where \textit{rest} is shorthand for the remaining conditioning parameters $(\boldsymbol{\theta}, \boldsymbol{\mu}, \gamma, \sigma_w^2,  \sigma_e^2, \sigma_\mu^2)$. Denote $\boldsymbol{Y}_{calib}, \boldsymbol{V}_{calib}$ as held-out calibration data from some (relatively small) number of voxels $N_{calib}$. We estimate the unknown variance parameters by maximizing the likelihood of $\boldsymbol{Y}_{calib}$ under the predictive distribution \eqref{eqn:local_calibration_likelihood}:
\begin{equation}\label{eqn:calibration_likelihood_optimization}
\hat{\sigma}_{w}^2,\hat{\sigma}_{\mu}^2 = \underset{\sigma_\mu^{2},\sigma_{w}^2}{\text{argmax}}\prod_{i=1}^{N_{calib}}p(\boldsymbol{y}_{i} |\boldsymbol{v}_{i}, \boldsymbol{V}_{train}, \boldsymbol{Y}_{train}, \widehat{\boldsymbol{\theta}}, \widehat{\boldsymbol{\mu}}, \gamma, \sigma_w^2,\hat{\sigma}_e^2,\sigma_\mu^2),
\end{equation}
In both simulation and real data experiments, we find that maximizing Equation~\eqref{eqn:calibration_likelihood_optimization} over a relatively small 2-dimensional grid, the ranges of which are chosen via domain  knowledge, is sufficient for good performance, though alternative optimization or sampling approaches could also be applied.

\subsection{Uncertainty Quantification}\label{ssec:uncertianty_quantification}
We now discuss how to use the derived approximate predictive distribution for uncertainty quantification. Point-wise $100(1-\alpha)\%$ credible intervals for the field estimates at any $\boldsymbol{v}$ can be obtained in closed form using the quantiles of the normal distribution $Z_{1-\frac{\alpha}{2}}$ for any $\boldsymbol{p}\in\mathbb{S}^2$ via
\begin{equation}\label{eqn:pw_confidence_interval}
\widehat{\mathbb{E}}\left[g(\boldsymbol{v}, \boldsymbol{p})\right] \pm Z_{1-\frac{\alpha}{2}}\sqrt{\widehat{\text{Var}}\left[g(\boldsymbol{v},\boldsymbol{p})\right]},
\end{equation}
where $\widehat{\mathbb{E}}\left[g(\boldsymbol{v},\boldsymbol{p})\right], \widehat{\text{Var}}\left[g(\boldsymbol{v},\boldsymbol{p})\right]$ are shorthand for the mean and covariance function from  
\eqref{eqn:function_space_posterior}. 
\par 
For most applications, it is important to estimate some latent quantity of interest (QOI) from the ODF. For example, the ODFs can be used to form scalar statistics quantifying local structural features, which are then used in downstream tasks such as tissue segmentation, tractography and mass multivariate groupwise statistical analysis. Specifically, letting $T$ denote the function that maps the ODF $g_{\boldsymbol{v}}$ to some QOI $\tau_{\boldsymbol{v}}$, we can propagate the uncertainty through $T$ via sampling:
\begin{equation}\label{eqn:QOI_inference}
\begin{aligned}
    &g_{\boldsymbol{v}} \sim \text{ posterior in \eqref{eqn:function_space_posterior}} \\
&\tau_{\boldsymbol{v}} = T(g_{\boldsymbol{v}}),
\end{aligned}
\end{equation}
which can then be used for inference. In the special case that $T$ is a linear map, the distribution of $\tau_{\boldsymbol{v}}$ can be obtained in closed from. 

\section{Experiments}\label{sec:experiments}

\subsection{Datasets}

\subsubsection{Synthetic Phantoms}
A synthetic ODF field is generated using the multi-tensor model, with corresponding diffusion signal function defined as
\begin{equation}\label{eqn:mixture_tensor_model}
            S(\boldsymbol{v},\boldsymbol{p}|b) = \sum_{t=1}^T \frac{1}{T}\text{exp}(-b\boldsymbol{p}^\intercal \boldsymbol{D}_{t}(\boldsymbol{v})\boldsymbol{p})
\end{equation}
where $\boldsymbol{D}_{t}(\boldsymbol{v})$ is the $t$'th diffusion tensor at spatial location $\boldsymbol{v}$, $b$ is the b-value of the acquisition and $T$ is the number of fibers. In all of our experiments, we fix $b=3,000\text{s}/\text{mm}^2$ and assume a cylindrical model of diffusion with fixed eigenvalues  of $(15, 0.3, 0.3)\times 10^{-2}\text{mm}^2/\text{s}$. In order to avoid contaminating analysis with truncation bias, i.e. the irreducible bias resulting from representing the functions~\eqref{eqn:mixture_tensor_model} using a finite number of $K$ basis functions, we project the signal into $\text{span}(\boldsymbol{\phi})$ using linear regression over a dense sampling of function evaluations on $\mathbb{S}^2$. An electrostatic repulsion algorithm \citep{jones1999} is used to create uniformly sampled gradient directions on $\mathbb{S}^2$ for angular sample size $M$. The observed diffusion signal is simulated from measurement model \eqref{eqn:stat_model_observed_data} with measurement error variance $\sigma_e$, defining the theoretical SNR$:=1/\sigma_e$. 
\par 
We consider both a relatively simple 2D and more complex 3D geometry. For the former,  we simulate the diffusion field resulting from a crossing pattern of two fiber bundles perpendicular to one another in a 2D rectangular area. As depicted in the left panel of Figure~\ref{fig:2d_crossing_analysis_qualitative}, this geometry results in three distinct regions in the slice: a single fiber region with peak direction oriented along the x-axis, a single fiber region with peak direction oriented along the y-axis, and a two fiber region with a 90 degree crossing angle. For the latter, we construct a 3D phantom in the shape of a Caduceus. Specifically, the eigenvectors of the diffusion tensor field are created from the tangents of two intersecting 3D volumetric spirals, resulting in several single fiber regions along with a crossing fiber region with variable crossing angle, as displayed in Figure~\ref{fig:path_analysis}a. \newtext{All synthetic experiments are repeated for 50 Monte-Carlo replications.}

\subsubsection{\textit{In-vivo} Data}
A publicly available high-resolution (760 $\mu m^3$) dataset gathered on the MGH-USC 3 T Connectom scanner was used for evaluation. Full acquisition and processing can be found in \cite{wang2021}. Briefly, a single patient was scanned in 9 separate sessions and a total of 2,808 diffusion volumes were collected split between 420 $b=1,000s/mm^2$, 840 $b=2,500s/mm^2$ and 144 $b=0s/mm^2$, along with the corresponding reverse phase-encoding volumes. The total scanning time was approximately 18 hours. This dataset was chosen for analysis due to the low SNRs encountered, owing to the high spatial resolution of the acquisition.
\subsection{Implementation Details}\label{ssec:implementation_details}
The neural field used to parameterize $\boldsymbol{\xi}_{\boldsymbol{\theta}}$ has a SIREN architecture \citep{sitzmann2020}, consisting of an initial $\text{sine}$ encoding layer with random frequencies  followed by an $L$-layer multi-layer perceptron with $\text{sine}$ non-linearities. For maximizing~\eqref{eqn:penalized_likelihood}, we followed the initialization scheme discussed in \cite{sitzmann2020} and used the Adam optimizer with learning rate of $10^{-4}$. As $L$ and $r$ implicitly control the rank of the function representation space, we must adjust these parameters based on the complexity of the underlying field. For the 2D synthetic phantom, we set $L=3$ with $r=64$ and trained for 500 iterations. For the $3D$ phantom and 3D real-data ROI, we set $L=3$ with $r=128$ for 2,000 iterations. For the real-data full slice 2D ROI, we use  $L=3$ with $r=256$ and again train for 2,000 iterations. We initially selected $\lambda_c$ and $\gamma$ jointly in our BO-based hyperparameter optimization scheme, and found the performance to be relatively stable on a sub-manifold of the parameter space. As we have some domain knowledge on the properties of the ODFs for a given $b$-value, we ultimately found better performance by fixing $\gamma$ and selecting $\lambda_c$ using 20 iterations in Algorithm~\ref{alg:BO_hyperparameter_optimization} of the supplemental material. We set the length-scale parameter $\rho=0.5$ and took $\nu=1.0$ for $b=3,000$ and $\nu=2.0$ for $b=1,000$, as it is well known that the smoothness of the ODFs using approximation \eqref{eqn:FRT} is inversely related to the $b$-value. We used harmonics up to order 8, resulting in $K=45$ angular basis functions. 

\subsection{Competing Approaches}
We compare our methodology, from here on referred to as neural orientation distribution field (NODF), to popular alternative methods for ODF estimation and uncertainty quantification. For estimation, the roughness penalized spherical harmonic least squares estimator (SHLS) from \cite{descoteaux2007}, defined as 
\begin{equation}\label{eqn:ridge_regression}
        \hat{\boldsymbol{c}}_{\boldsymbol{v}_{i}} = \underset{\boldsymbol{c}}{\text{argmin}} \overset{M}{\underset{m=1}{\sum}} (y_{im} - \boldsymbol{c}^\intercal \boldsymbol{\phi}(\boldsymbol{p}_{m}))^2 + 
        \lambda \int_{\mathbb{S}^2} (\Delta_{\mathbb{S}^2}(\boldsymbol{c}^\intercal \boldsymbol{\phi}(\boldsymbol{p})))^2 d\boldsymbol{p},
\end{equation}
with penalty strength $\lambda > 0$, is used. As we are interested in investigating performance in low SNR regimes, we consider SHLS both applied to the raw data (SHLS-Raw) and the spatially smoothed data obtained from applying the local PCA method from \cite{veraart2016} (SHLS-MPPCA). The MPPCA-based approach was included in our comparisons as it is a common step for image analysis in noisy regimes.
\par 
As the overwhelming majority of existing methods to quantify the uncertainty in ODF estimation propose the use of some variant of the bootstrap, we adopt a residual bootstrap procedure for uncertainty quantification of the SHLS-Raw and SHLS-MPPCA estimates. \newtext{Please see Section~\ref{ssec:bootstrap_discussion} of the supplemental material for further details on this procedure.} We use the \textbf{dipy} implementations of the regression estimator~\eqref{eqn:ridge_regression} and MPPCA. For MPPCA, we use the suggested defaults and set the patch radius parameter to 2, resulting in a smoothing window of $5\times5\times 5$. The parameter $\lambda$ in Equation~\eqref{eqn:ridge_regression} was selected using generalized cross validation \citep{wahba1979}, which we implemented in python.
\subsection{Evaluation Metrics}

\subsubsection{ODF Evaluation}
To evaluate the estimation performance of the methods, we compute the (normalized) $L^2(\mathbb{S}^2)$ error:
$
\|\widehat{g}_{\boldsymbol{v}} - g_{\boldsymbol{v}}\|_{L^{2}}/\|g_{\boldsymbol{v}}\|_{L^{2}},
$
between the estimated ($\widehat{g}_{\boldsymbol{v}}$) and true ($g_{\boldsymbol{v}}$) ODF. 
To assess the uncertainty quantification, we compute the point-wise 95\% intervals for the underlying ODF at 200 (approximately) equispaced directions on $\mathbb{S}^2$. For NODF, these intervals can be formed using \eqref{eqn:pw_confidence_interval}. For the competitors, the residual bootstrap was applied to obtain $500$ bootstrapped estimates, which were then used to form bootstrapped point-wise 95\% intervals. The coverage was evaluated by computing the empirical coverage proportion (ECP), defined to be the proportion of the directions where the constructed interval contained the true function value. To assess the precision of the predictions, we also computed the average point-wise interval length (IL). For example, the solid colored surface in Figure~\ref{fig:ecp}a shows the true ODF from a randomly selected voxel in the crossing fiber region of the 2D phantom considered in Section~\ref{sssec:2d_phantom}, plotted in spherical coordinates. The translucent surfaces show the 95\% upper and lower confidence surfaces formed from Equation~\eqref{eqn:pw_confidence_interval}, for sparse ($M=10$) and dense ($M=60$) samplings. The ECP measures the proportion of the true ODF contained between the constructed surfaces, while the IL represents the average distance between them. \newtext{Hence, given two methods that both exhibit good coverage properties, e.g., ECP near $0.95$, we prefer a method that tends to result in smaller IL.}
\begin{figure}[!ht]
    \centering
    \includegraphics[scale=0.7]{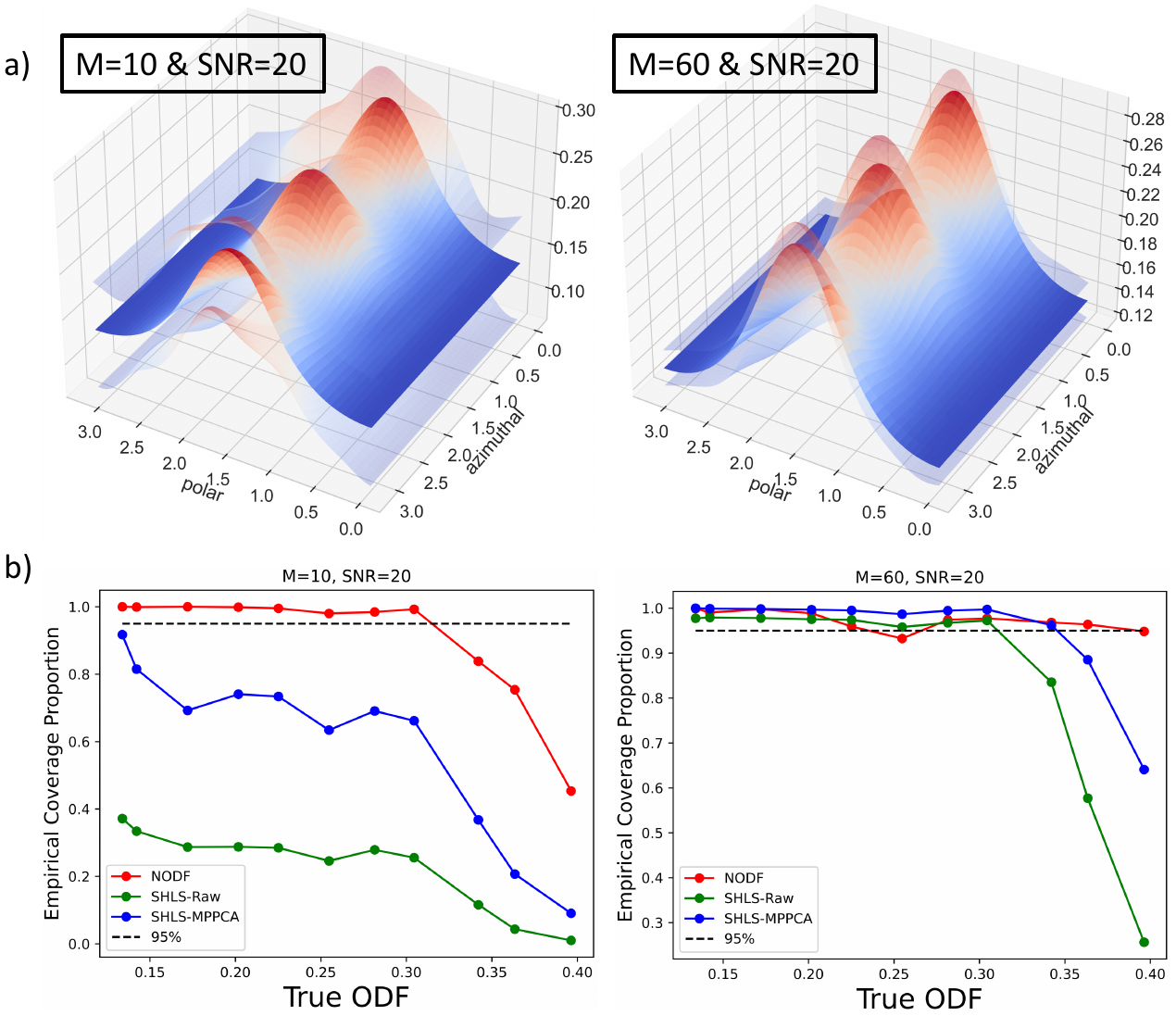}
    \caption{a) The solid color shows the true ODF in a randomly selected crossing voxel of the 2D phantom, plotted in spherical coordinates. The translucent surfaces show the upper and lower 95\% confidence surfaces formed via Equation~\eqref{eqn:pw_confidence_interval}. b) Monte Carlo average ECP plotted as a function of the true ODF value.}
    \label{fig:ecp}
\end{figure}
\subsubsection{Derived Quantities of Interest}\label{sssec:derived_QOI}
\newtext{We also assess the methods in terms of the downstream estimation and uncertainty calibration for both \textit{local} and \textit{global} QOI's computed from the ODFs. For a local (voxel specific) quantity, we consider the generalized fractional anisotropy (GFA), defined as 
$$
\text{GFA}(h) = \sqrt{\frac{n \sum_{j=1}^n(h(\boldsymbol{p}_j) - n^{-1}\sum_{l=1}^{n}h(\boldsymbol{p}_l))^2}{(n-1)\sum_{j=1}^nh(\boldsymbol{p}_j)^2}} \quad h\in\mathcal{H},
$$
using some dense discretization $\{\boldsymbol{p}_1,...,\boldsymbol{p}_n\}\subset\mathbb{S}^2$. Since the GFA is a non-linear function of the ODF, we must perform inference via sampling \eqref{eqn:QOI_inference}. The performance is evaluated by computing the ECP and IL, along with the bias and absolute error, of the GFA.}
\par 
\newtext{For a global quantity, we consider the problem of inferring large-scale white matter fiber tracts from the diffusion data, i.e., \textit{tractography}. A simple approach to tractography is to model a tract as a 3D curve $x\in \mathcal{C}^{1}([0,1]\mapsto\Omega\subset\mathbb{R}^3)$ which is defined to be the solution to the initial value problem:
\begin{equation}\label{eqn:IVP}
    \frac{\partial x(t)}{\partial t} = u(x(t))\qquad x_0 = x(0) \in \Omega,
\end{equation}
where $u$ is a continuous vector field on $\Omega$ which models the principal diffusion directions. A simple peak detection algorithm is used to identify the principal diffusion directions as local ODF maxima ($\ge$ half the global maximum) over 2,562 points on an icosphere. Streamlines are then generated via Euler's method (step size 0.05) and terminating at points with fractional anisotropy $<0.25$. A ground truth streamline $x^{(gt)}$ for a seed $x_{0}$ is defined by the solution to \eqref{eqn:IVP} using the ground truth ODF field.}
\par 
\newtext{In practice, the observed data is used to form an estimate of the field $\widehat{u}$ by obtaining the principal diffusion directions from the estimated ODF, which is then plugged into Equation~\eqref{eqn:IVP} to estimate the streamlines. Propagating the uncertainty in the estimated ODFs to tract uncertainty can be accomplished by: i) (re)sampling the ODFs (using posterior~\eqref{eqn:function_space_posterior} for NODF and bootstrap for the competitors), ii) calculating the principal diffusion directions, iii) generating curves according Equation~\eqref{eqn:IVP} for each sample.}
\par 
\newtext{From the resulting sample of curves, we would like to evaluate the performance of the probabilistic tractography as well as obtain some quantification of the uncertainty.
The performance was quantified by computing the minimum $L^2([0,1]\mapsto \Omega)$ distance between the ground truth $x^{(gt)}$ and 10 \textit{deepest} curves in the sample, as quantified by the curve depth measure from \cite{micheaux2021}. The depth measure can be considered as an extension of the univariate order statistics, with the deepest curve in the sample an analog of the sample median, and multiple deep curves are used in evaluation due to the often encountered multi-modality of the sampled paths (see Figure~\ref{fig:path_analysis}b for an example). Uncertainty was quantified by computing the cross sectional angular dispersion (AD) measure of the sample. Specifically, for any $t$, we can compute the point-wise empirical covariance matrix of the directions 
$$
\boldsymbol{A}(t) = \frac{1}{n}\sum_{i=1}^n\frac{\partial_t x_i(t)}{\|\partial_t x_i(t)\|_2}\left[\frac{\partial_t x_i(t)}{\|\partial_t x_i(t)\|_2}\right]^{\intercal},
$$
where $x_i$ is a sampled path and $n$ is the number of samples. The eigenvalues of $\boldsymbol{A}(t)$ can be used to form the angular dispersion (AD) measure, defined as 
$$
\text{AD}(t):=\text{sin}^{-1}(\sqrt{ 1 - \text{EigMax}(\boldsymbol{A}(t))})$$ 
\citep{schwartzman2008}. Notice that $\text{AD}(t)=0$ if the derivative of all curves in the sample point in same direction, and approaches the maximum value $\approx 0.94$ radians when the directions are uniformly distributed on the $\mathbb{S}^2$. To handle the varying length, the curves were re-sampled using cubic b-splines over a uniform grid of 100 points on $[0,1]$.}

\subsection{Results}

\subsubsection{2D Phantom}\label{sssec:2d_phantom}

\begin{figure}[!ht]
    \centering
    \includegraphics[width=\textwidth]{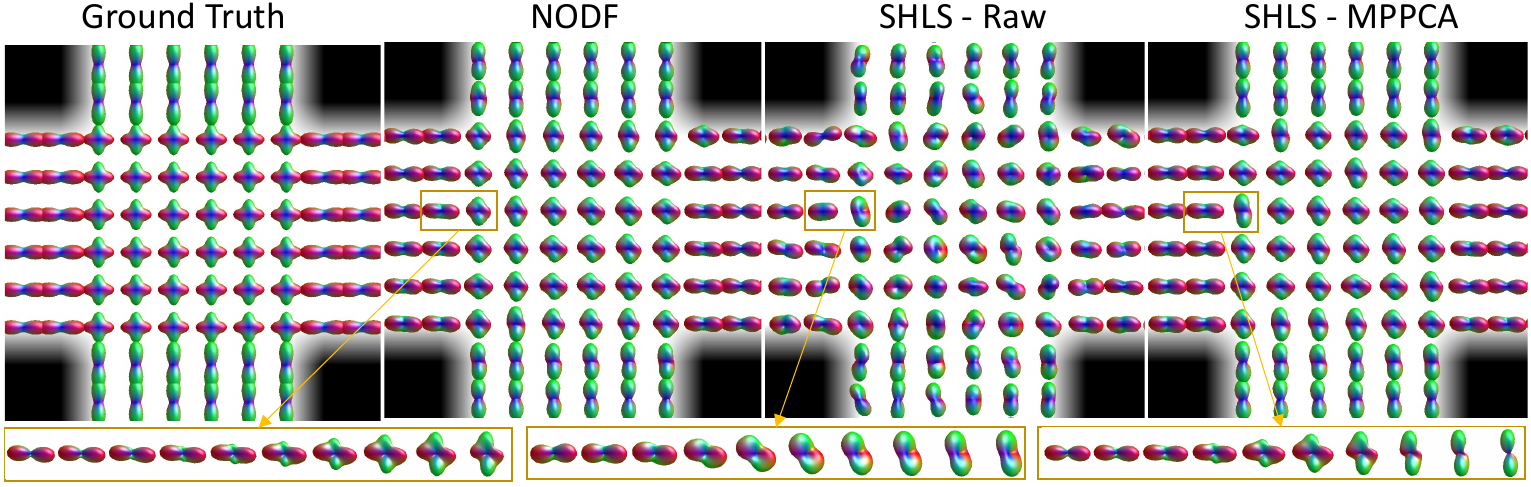}
    \caption{True ODF field and estimates for $M=10$. \newtext{The yellow boxes show off-grid predictions formed on an equispaced grid between adjacent voxel coordinates using the predictive mean in Equation~\eqref{eqn:function_space_posterior} for NODF, and bilinear interpolation of the coefficients for SHLS-Raw and SHLS-MPPCA.} \newtext{Min-max normalization was applied to the interpolants to highlight orientation details.}}
\label{fig:2d_crossing_analysis_qualitative}
\end{figure}
\begin{table}[!ht]
    \centering
    \small
\begin{tabular}{llrrrrrr}
\toprule
    &    & \multicolumn{3}{l}{M=10} & \multicolumn{3}{l}{M=60} \\
    &    &    NODF & SHLS-Raw & SHLS-MPPCA &    NODF & SHLS-Raw & SHLS-MPPCA \\
 &  &         &          &            &         &          &            \\
\midrule
ODF & ECP &  0.9819 &   0.2925 &     0.7280 &  0.9832 &   0.9556 &     0.9898 \\
    & IL &  0.1071 &   0.0206 &     0.0467 &  0.0291 &   0.0816 &     0.0343 \\
    & $L^2$-Error &  0.0956 &   0.1429 &     0.0994 &  0.0243 &   0.0690 &     0.0263 \\
GFA & ECP &  0.9813 &   0.2540 &     0.3521 &  0.9510 &   0.9663 &     0.9658 \\
    & IL &  0.1284 &   0.0354 &     0.0697 &  0.0422 &   0.1236 &     0.0485 \\
    & Abs. Error &  0.0261 &   0.0512 &     0.0447 &  0.0121 &   0.0282 &     0.0127 \\
    & Bias &  0.0047 &  -0.0330 &    -0.0401 &  0.0010 &  -0.0182 &    -0.0100 \\
\bottomrule
\end{tabular}
    \caption{Simulation results for 2-D crossing phantom. \newtext{Associated standard errors can be found in Table~\ref{tab:sim_2d_results_standard_errors} of the supplement.}}
\label{tab:2d_crossing_analysis_qualitative}
\end{table}
\newtext{The Monte-Carlo averaged results in Table~\ref{tab:2d_crossing_analysis_qualitative} (standard errors in Table~\ref{tab:sim_2d_results_standard_errors} of the supplemental materials) compare the methods' performance in estimation and uncertainty quantification of both the latent ODF field and GFA for the 2D synthetic data under extremely sparse $M=10$ and relatively dense $M=60$ regimes, with $\text{SNR}=20$.} The $L^2$ errors indicate that SHLS-Raw does not produce reasonable estimates in the super sparse ($M$=10) case. 
This is unsurprising, as SHLS-Raw does not integrate spatial information to improve estimation efficiency, relying only on an angular smoothness prior (i.e. the Laplacian-based roughness penalty), which is  insufficient in this regime. Table~\ref{tab:2d_crossing_analysis_qualitative} also shows severe under-coverage for the point-wise intervals formed from the residual bootstrap for $M=10$. Coupled with the small IL, we see that the uncertainty is being dramatically underestimated in this sparse regime.
Leveraging the local neighborhood information by first denoising the data with MPPCA and then applying the SHLS significantly improves the point estimates for both sparse and dense samplings, as is reflected in the $L^2$ errors in Table~\ref{tab:2d_crossing_analysis_qualitative}. However, the bootstrapped point-wise intervals from SHLS-MPPCA also exhibit significant under-coverage for the $M=10$ case. In contrast, we see that the NODF estimates have the lowest average $L^2$ error with the ground truth fields for both $M=10$ and $M=60$, while also maintaining properly calibrated point-wise intervals (on average) in both regimes. 
\newtext{This is particularly notable in the $M=10$ regime, where the large IL for NODF accurately reflects the uncertainty inherent in this sparse sample and noisy set-up.} Furthermore, in the $M=60$ regime where all methods produce point-wise intervals that are calibrated on average, NODF displays the smallest IL, indicating the tightest intervals between all the methods. 
\par 
Figure~\ref{fig:ecp}b plots the ECP as a function of the true underlying ODF value, averaged over all voxels. For the sparse $M=10$ case, our method displays dramatically better coverage than the alternative approaches for all ODF values, achieving the desired $\ge0.95 $ for all but the peak ODF values, where all approaches exhibit deteriorated performance, likely due to the bias induced by the need for strong prior regularization in this regime. For the $M=60$ regime, our method is well calibrated for all the ODF values, including the peak, in which both SHLS-Raw and SHLS-MPPCA bootstrap continue to exhibit under-coverage. For the SHLS-Raw bootstrap method, the observed under-coverage at the peak could potentially be mitigated by undersmoothing the data, i.e. manually setting 
$\lambda$ in Equation~\eqref{eqn:ridge_regression} to be very small or 0. That said, due to the bias-variance trade-off, this would significantly increase the variability of the estimates, further increasing the IL which, as we see in Table~\ref{tab:2d_crossing_analysis_qualitative}, is already much larger than NODF, and likely leading to overall deterioration in the $L^2$-error. How to avoid this effect in the SHLS-MPPCA bootstrap is less clear, as taking the equivalent bias-reduction approach for the local PCA would simply be to chose a large rank, which would effectively just return the raw data and reduce the procedure to SHLS-Raw.
\par 
For estimation and uncertainty quantification of the GFA, Table~\ref{tab:2d_crossing_analysis_qualitative} shows that sampling from the NODF posterior yields the lowest average bias and demonstrates proper calibration in terms of the ECP achieving the nominal level, for both sparse and dense set-ups. \newtext{Notably, in the M=10 regime, though the average ODF $L^2$ errors from SHLS-MPPCA are less than 5\% larger than those of NODF, the gap in GFA inference is much more pronounced. Specifically, the GFA error from SHLS-MPPCA is nearly double that of NODF. SHLS-MPPCA also shows bias that is an order of magnitude larger than NODF, and we observe further deterioration in the ECP.}
The point estimates dramatically improve for SHLS-MPPCA in the $M=60$ case, but the NODF estimates still lead to less bias along with tighter intervals and lower error. 
\par 
A thus far understated advantage of the proposed methodology is the native ability to evaluate the posterior predictive distribution of the field at any spatial location $\boldsymbol{v}\in\Omega$. That is, once trained, predictive inference can be formed continuously for the entire domain.
\newtext{The yellow boxes in Figure~\ref{fig:2d_crossing_analysis_qualitative} show the estimates over an equispaced grid between two neighboring voxel coordinates, formed using the posterior mean for NODF and bilinear interpolation on the spherical harmonic coefficients for both SHLS-Raw and SHLS-MPPCA. For NODF,} we observe that at all points along the path, the ``off the grid'' field estimates are a reasonable mixture between the single and crossing fiber regions. 

\subsubsection{3D Phantom}\label{sssec:phatom3D} 
\begin{table}[]
    \centering
    \addtolength{\leftskip} {-2cm}
    \addtolength{\rightskip}{-2cm}
    \small
\begin{tabular}{ll|rrr|rrr|rrr}
\toprule
   &    & \multicolumn{3}{c}{NODF} & \multicolumn{3}{c}{SHLS-Raw} & \multicolumn{3}{c}{SHLS-MPPCA} \\
M   &  SNR  & $L^2$ & ECP & IL & $L^2$ & ECP & IL  & $L^2$ & ECP & IL \\
\midrule
10 & 20 &        0.108 &          0.998 &        0.159 &        0.214 &          0.233 &        0.014 &        0.119 &          0.827 &        0.049 \\
   & 10 &        0.145 &          0.997 &        0.166 &        0.354 &          0.851 &        0.129 &        0.172 &            0.864 &        0.095 \\
20 & 20 &        0.084 &          0.993 &        0.106 &        0.163 &          0.952 &        0.091 &        0.089 &         0.866 &        0.038 \\
   & 10 &        0.116 &          0.987 &        0.111 &        0.280 &          0.973 &        0.159 &        0.129 &           0.903 &        0.073 \\
30 & 20 &        0.052 &          0.980 &        0.056 &        0.134 &          0.974 &        0.088 &        0.060 &          0.925 &        0.033 \\
   & 10 &        0.086 &          0.960 &        0.057 &        0.238 &          0.991 &        0.159 &        0.099 &           0.926 &        0.062 \\
40 & 20 &        0.048 &          0.973 &        0.040 &        0.119 &          0.985 &        0.088 &        0.054 &           0.937 &        0.030 \\
   & 10 &        0.076 &          0.942 &        0.041 &        0.214 &          0.997 &        0.165 &        0.088 &         0.935 &        0.055 \\
50 & 20 &        0.045 &          0.973 &        0.037 &        0.108 &          0.991 &        0.088 &        0.049 &         0.958 &        0.030 \\
   & 10 &        0.070 &          0.942 &        0.037 &        0.196 &          0.999 &        0.167 &        0.080 &         0.969 &        0.062 \\
60 & 20 &        0.041 &          0.973 &        0.035 &        0.101 &          0.994 &        0.088 &        0.046 &        0.967 &        0.030 \\
   & 10 &        0.066 &          0.948 &        0.037 &        0.184 &          1.000 &        0.169 &        0.075 &       0.973 &        0.058 \\
\bottomrule
\end{tabular}
\caption{Monte-Carlo averaged results for the $L^2(\mathbb{S}^2)$ error ($L^2$), point-wise interval length (IL) and empirical coverage proportion (ECP) over the 3D synthetic phantom. \newtext{Associated standard errors can be found in Table~\ref{tab:sim_3d_results_standard_errors} of the supplement.}}
    \label{tab:sim_3d_pointwise_results}
\end{table}
\par\bigskip
\noindent{\textit{Local Error Analysis and Uncertainty Calibration}: The Monte-Carlo simulation averaged results over all voxels in the phantom are displayed in Table~\ref{tab:sim_3d_pointwise_results} (standard errors in Table~\ref{tab:sim_3d_results_standard_errors} of the supplemental materials) \newtext{for various angular sample sizes $M$, for both high and low SNR (20 and 10, respectively)}. NODF uniformly outperforms the competitors in terms of average $L^2(\mathbb{S}^2)$ error for all settings. Additionally, we see that the ECP of our method is generally in line with the $0.05$ level for all $M$ and SNR considered. As observed in the 2D phantom results, we notice poor empirical coverage of the bootstrap methods for the super sparse $M=10$ case. The empirical coverage of the SHLS-Raw bootstrap is well calibrated as $M$ increases, but at the cost of point-wise intervals which are very large, indicating substantial uncertainty in the predictions. Echoing the results from Section~\ref{sssec:2d_phantom}, the SHLS-MPPCA bootstrap results in reasonably good point estimates, but we notice the under-coverage persist for larger $M$ as compared to SHLS-Raw. 
Notice that for all cases, NODF follows the expected trend of higher uncertainty (IL) for fewer sample directions or low SNR. Additionally, for a fixed $M$, the IL of our method is more robust to the low SNR case than the competitors, i.e. indicating more precision in the ODF estimates for noisy regimes.
\par\bigskip
\noindent{\textit{Path Analysis and Global Uncertainty Propagation}}: \newtext{To evaluate the methods' performance in tractography, we obtain a set of ground truth streamlines $x^{(gt)}$ using the ground truth field for 100 randomly selected seeds $x_{0}$ in the $z=0$ plane. 100 posterior samples for our method and 100 bootstrapped samples for the competitors were formed using the approach outlined in Section~\ref{sssec:derived_QOI} and used for subsequent analysis.}
\par 
The top row of Figure~\ref{fig:path_analysis}b shows an example ground truth streamline (black dashed curve) from a randomly selected seed, along with the posterior streamlines for the methods in both sparse $M=20$ and dense $M=60$ cases. The black dotted curve shows the center-line of the crossing tube. The large volume of the crossing region, existence of small crossing angles and low SNR used for the simulation makes this a difficult tracing problem. For the sparse $M=20$ case, we see that only our method is able to produce posterior streamlines that follow the true streamline, though the spatial spread of the curves is large and we see some samples jumping to the crossing streamline. SHLS-Raw also has a high degree of uncertainty, but sampled streamlines fail to cover the true path. Echoing results from the local analysis, bootstrapping the SHLS-MPPCA for sparse samples underestimates the variability in possible paths, clustering tightly around the wrong turn path. As $M$ increases, the spread of the NODF paths tightens around the true streamline, converging faster than either of the two competitor methods. The bottom two rows of Figure~\ref{fig:path_analysis}b show the distribution of errors and angular dispersion of the posterior samples over all 100 seeds. Echoing the observations from the single randomly selected seed, we note lower errors for all $M$ as well as monotonically reducing uncertainties for our method. 
\begin{figure}[t]
    \centering
    \includegraphics[scale=0.7]{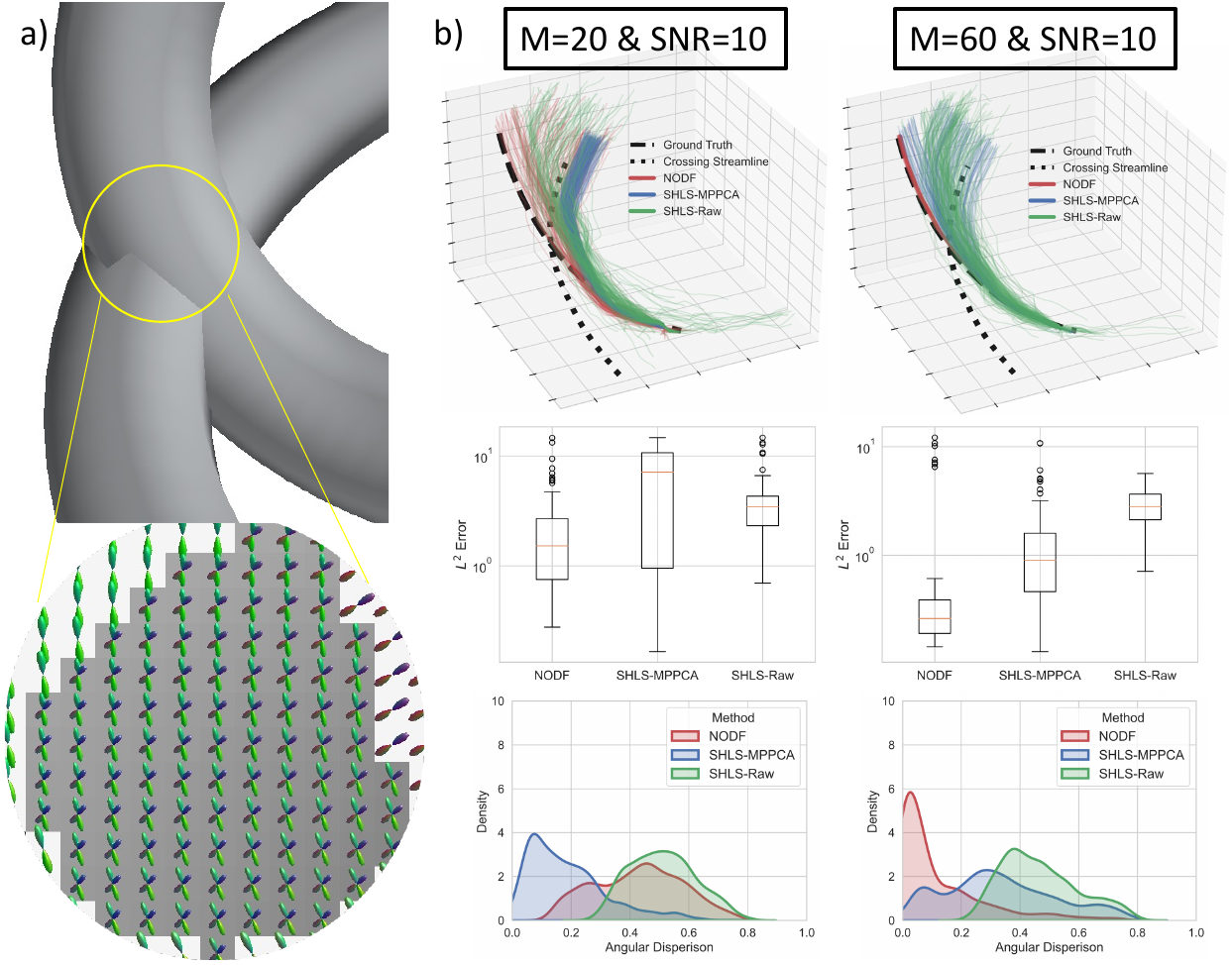}
    \caption{a) \newtext{(Top): 3D synthetic phantom. (Bottom) Slice of the 3D crossing region with background colored by FA. Pictured ODFs were sharped to enhance crossing information.} b) (Top row) Probabilistic tracing results from the 3D phantom for each method for sparse and dense sampling budgets $M$ for a fixed initial starting point. (middle row) $L^2$-curve errors (bottom row) and angular dispersion averaged over all starting points.}
    \label{fig:path_analysis}
\end{figure}

\subsubsection{Real Data Experiments}\label{ssec:real_data}
\begin{figure}[!ht]
    \centering
    \includegraphics[scale=0.6]{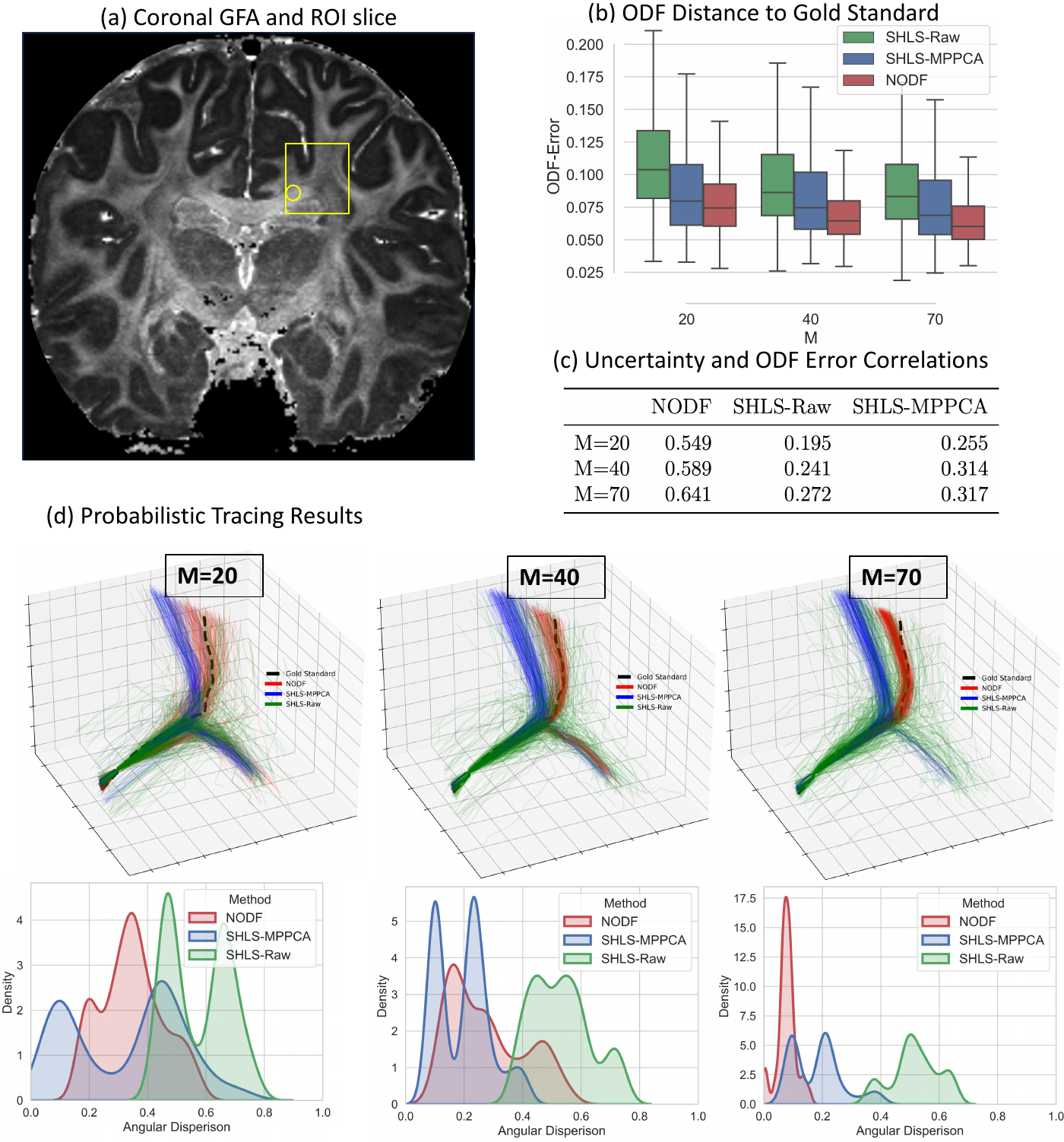}
    \caption{Real data application: (a) Coronal view of GFA map with 3D ROI cross section outlined in yellow. (b) Distribution of $L^2$-distances between estimated and gold standard ODFs across ROI. (c) Correlation between $\text{CV}_{\text{GFA},\boldsymbol{v}}$ and $L^2$-distances to gold standard. (d) Probabilistic tracing results for initial seed in corpus callosum, identified by yellow circle in panel (a).}
    \label{fig:ROI_3D_results}
\end{figure}
\newtext{We now evaluate the methods' performance on the high-resolution in-vivo dataset.} Since we do not have a ground truth for in-vivo data, we apply SHLS to all 9 session's data at $b=1,000s/mm^2$. The relatively large number of gradient directions per scan coupled with averaging over all 9 scans allows reasonable performance for this estimator. The resulting ODFs are used as a ``gold standard'' for comparison, though there remains non-trivial uncertainty in the estimates. We then apply each method to the session 1 data for $M=20, 40$ and $70$ unique directions, and compare the recovery to the gold standard.
\par 
We begin our analysis on a 3D ROI of dimension 
$21\times 10 \times 15$, which covers parts of the corpus collosum, centrum semiovale and corticospinal tract. A 2D coronal cross section of this region is outlined in the yellow box in Figure~\ref{fig:ROI_3D_results}a. The box-plots in Figure~\ref{fig:ROI_3D_results}b show the distribution of normalized $L^2(\mathbb{S}^2)$ distances between the gold standard ODFs and the ODFs estimated from each method as a function of $M$ over the ROI. We see that for all budgets considered, our method predicts ODFs with the lowest distance relative to the gold standard data. 
\par 
Since we lack access to the ground truth parameter, assessing the uncertainty performance via the empirical coverage measures, \newtext{as was done in the synthetic data experiments}, is not feasible. Instead, we form a scalar summary of the uncertainty in the predicted ODF using the coefficient of variation of the generalized fractional anisotropy, defined as:
$\text{CV}_{\text{GFA},\boldsymbol{v}} = \text{Var}\left[\text{GFA}(g_{\boldsymbol{v}})\right]^{1/2}\big/\mathbb{E}\left[\text{GFA}(g_{\boldsymbol{v}})\right]$, approximated using posterior samples from \eqref{eqn:function_space_posterior} for our method and the residual bootstrap for the alternatives. As it is common practice in medical imaging to utilize scalarized uncertainty metrics as features in downstream prediction tasks \citep{mehta2022}, inverse weightings for groupwise inference procedures \citep{jens2018}, or as a surrogate for the unknown prediction accuracy \citep{tanno2021}, a  scalar uncertainty metric that is more highly correlated with the true normalized $L^2(\mathbb{S}^2)$ prediction errors
implies better uncertainty modeling. Figure~\ref{fig:ROI_3D_results}c displays the correlation between $\text{CV}_{\text{GFA},\boldsymbol{v}}$ and the ODF distance to gold standard. We see that for all $M$, our method produces uncertainty estimates with higher correlations.
\begin{figure}[!ht]
    \centering
    \includegraphics[scale=0.5]{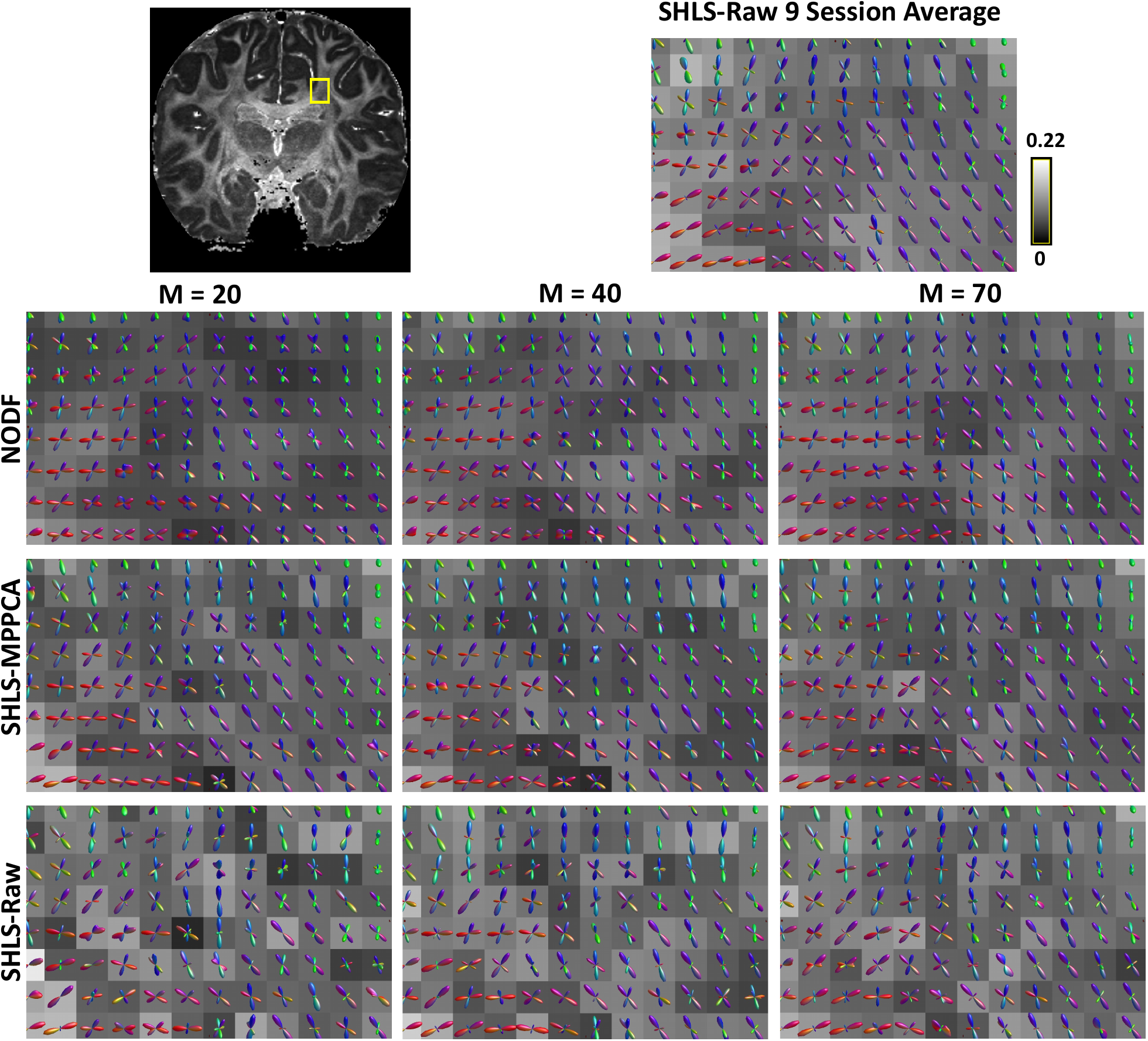}
    \caption{Field estimates for a \newtext{small} coronal slice of the 3D ROI \newtext{(yellow rectangle)} for each method and number of angular samples considered. Spherical deconvolution was applied to sharpen the directional information in the ODF estimates. \newtext{The background is colored using the mean GFA estimates for each method.}}
    \label{fig:roi3d_odf_estimates}
\end{figure}
\par 
Figure~\ref{fig:ROI_3D_results}d shows the probabilistic tractography paths from sampling \eqref{eqn:IVP} with initial seed in the corpus collosum, indicated by the yellow circle in Figure~\ref{fig:ROI_3D_results}a, using the same tracing algorithm and parameters outlined in \newtext{Section~\ref{sssec:derived_QOI}}. The gold standard curve is obtained by tracing the path from the gold standard ODFs. Results echo what was observed in the simulations. Specifically, NODF is the only method which produces streamlines which cover the gold standard curve for all $M$. The tract uncertainty measure (quantified using angular dispersion) is decreasing for NODF as a function of $M$, as the streamlines concentrate spatially near to the gold standard curve. In contrast, for $M=20$, SHLS-MPPCA under-estimates the uncertainty in the possible paths, producing two tight clusters, neither of which cover the true path, with many traces incorrectly taking a ``wrong turn'' down the corticospinal tract. For $M=70$, the SHLS-MPPCA tracings are somewhat reasonable, but are tightly concentrated and spatially biased with respect to the gold standard tracing. SHLS-Raw tracings are highly variable for all $M$, with many resulting in very improbable paths that circle around back down the corpus collosum. \newtext{Finally, it is important to acknowledge that uncertainties in ground truth ODF estimation, imprecise peak detection, and limited image resolution make it difficult to accurately determine if the fiber truly extends upwards or continues on laterally.}   
\par
The top \newtext{right} plot in Figure~\ref{fig:roi3d_odf_estimates} shows the ODF estimates from SHLS-Raw on the full 9 session data within the 2D cross section outlined in yellow in \newtext{the top left image}. The remaining panels show the mean field estimates from each method for each angular sample size on the session 1 data. The background of all plots is colored using the estimated \newtext{GFA}. For visual clarity, we sharpen the directional information by applying spherical deconvolution \citep{tournier2008resolving} to the estimated ODFs from each method. In comparison to the competitors, our method is able to infer ODF fields with far more spatial coherence in the directional information. 
\begin{figure}[!ht]
    \centering
    \includegraphics[width=\textwidth]{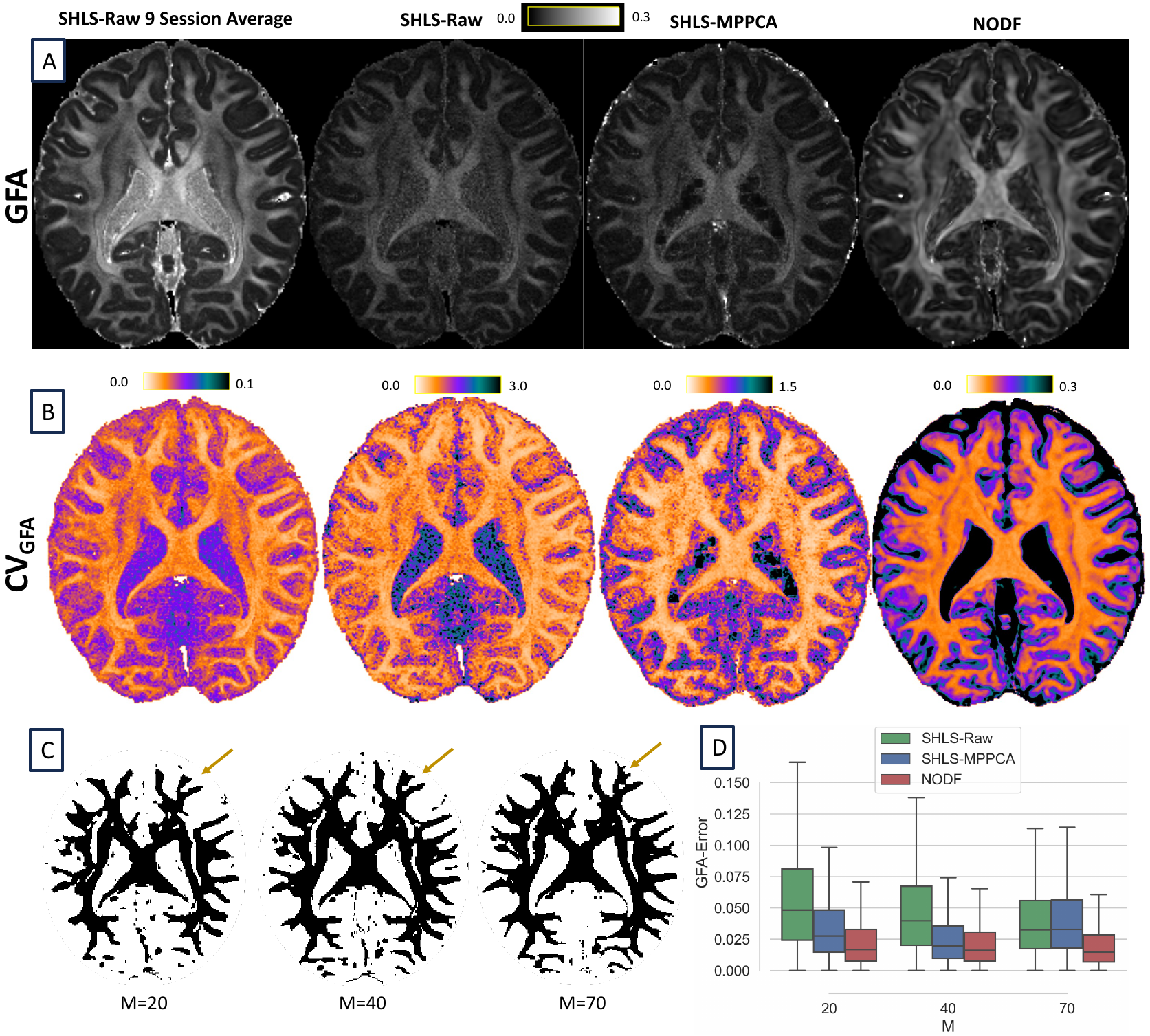}
    \caption{A) GFA estimates from an axial slice with $M=70$ directions. The corresponding GFA coefficient of variation is shown in B). C) The sets $\{\boldsymbol{v}\in\boldsymbol{V}:\text{GFA}\left(g_{\boldsymbol{v}}\right) > 0.05\}$ \newtext{(black)} estimated from the NODF posterior, with positive detections controlled voxel-wise at the 0.01 level. The boxplots in D) show the distribution of absolute distances between gold standard GFA and GFA estimates from session 1 data as a function of $M$.}
    \label{fig:gfa_uq}
\end{figure}
\par 
We also apply each method to a full 2D axial slice and analyze the performance on downstream estimation and uncertainty quantification of the GFA. Figure~\ref{fig:gfa_uq}a compares the GFA obtained from SHLS-Raw on the 9 session averaged data with the estimates from each method on the session 1 data for $M=70$ directions. The corresponding scalar uncertainty maps are provided below each image. We see that the uncertainty for NODF is generally lower along identifiable white matter structures and outlines sharp boundaries between white and gray matter. The boxplots in Figure~\ref{fig:gfa_uq}d display the distribution of absolute GFA errors (vs. session 1) for each method for all $M$ across the whole axial slice. We see that the GFA estimates from our method are generally the closest to the ``gold standard'' estimates. 
\par 
We note the presence of non-trivial uncertainties in the ``gold standard'' GFA (left plot of Figure~\ref{fig:gfa_uq}b), with notable overestimation occurring in specific areas of the ventricle regions, as depicted in the left plot Figure~\ref{fig:gfa_uq}a. This emphasizes that even when estimating relatively straightforward microstructural summaries like GFA in dense sampling regimes with standard estimation procedures, significant uncertainties can arise.
\par 
It may also be of interest to perform explicit statistical tests for various diffusion parameters. As a simple example, we may want to detect the set of voxels with: $\{\boldsymbol{v}\in\boldsymbol{V}:\text{GFA}\left(g_{\boldsymbol{v}}\right) > \mu_{\text{GFA}}\}$, for some threshold $\mu_{\text{GFA}} > 0$, e.g. for tissue segmentation. To do so, we form a voxel-wise $t$-statistic using samples from the NODF posterior \eqref{eqn:function_space_posterior} and identify significant voxels controlling (locally) for type-1 error at the 0.01 level. Figure~\ref{fig:gfa_uq}c shows the resulting significant region for $\mu_{\text{GFA}}=0.05$, with selected voxels \newtext{shown in black}. We note a clear recovery of white matter regions for all $M$ considered. Additionally, as $M$ increases, the increasing power allows for the detection of finer detailed structures, e.g. \newtext{the indicated area} near the white-grey interface. 

\section{Discussion and Future Work}\label{sec:discussion}
Measurement errors and the often ill-posed nature of the ODF inverse problem require some form of regularization/incorporating priors to allow for stable estimation. The key to our method is in the ability to learn a data-driven basis, parameterized by a flexible deep neural field, allowing implicit modeling of the spatial correlations in the data, which can then be leveraged to facilitate more powerful inference. In comparison to popular alternatives, the analysis presented suggest that NODF offers both better recovery of the ODFs, along with less bias and more reliable propagation of uncertainties to some downstream quantities of interest, under a variety of sampling regimes and noise levels.
\par 
\newtext{This methodology can also offer improvements in computational efficiency. Once the parameters of the distribution \eqref{eqn:function_space_posterior} are estimated, estimation and uncertainty quantification can be formed continuously at any spatial location in the image, requiring only a forward pass through the NF, along with some (relatively small) matrix computations. This is in contrast to resampling-based approaches, which can be computationally costly for all but the fastest estimators, owing to the large number of locations (voxels) we wish to make inference at in a typical imaging application. Please see Section~\ref{ssec:computation_bootstrap} of the supplemental material for further discussion on this topic.}
\par 
This method can be extended in several interesting directions. Most immediately, extensions to multiple $b$-value acquisitions is of interest. A simple way to apply the current method ``out of the box'' in such situations is fitting a separate field model to each shell. Due to our method's strong performance in the sparse angular sampling case, such an approach would be applicable even in the case of relatively few measurements at each $b$-value. Still, this ignores valuable correlations between the signal on different shells. Addressing the multi-shell situation in a more principled manner could involve replacing the Funk Radon transform with an alternative forward operator that is well defined for multi-shell data. However, further investigation is required to assess the feasibility and effectiveness of adopting such a strategy.
\par 
Improvements to the model and learning procedure for the spatial basis $\boldsymbol{\xi}_{\boldsymbol{\theta}}$ will be explored in the future. First, as our method is agnostic to the specific NF parameterization of $\boldsymbol{\xi}_{\boldsymbol{\theta}}$, alternative models can be seamlessly integrated. 
Due to the global nature of the sine activations, the SIREN parameterization used here may not be the best representation for the spatial basis system. Neural field architectures with spatially compact activation functions such as wavelets have shown promising performance in medical imaging inverse problems \citep{saragadam2023} and so may provide a better representation here. Second, in addition to the angular prior regularization, it may be beneficial to incorporate a spatial prior regularization term to the optimization problem \eqref{eqn:penalized_likelihood}, e.g. total variation. If purely orientational inference is desired, it may also be helpful to incorporate a sparsity inducing angular prior on the coefficients, e.g. Laplace ($l1$). However, doing so directly would destroy the Gaussian-Gaussian conjugacy that facilitates the fast uncertainty quantification, and hence we would suggest proceeding in a two stage manner, e.g. as was done by applying the second stage deconvolution to obtain the fODFs in Section~\ref{ssec:real_data}. \newtext{Finally, we find that to achieve good performance when fitting large-scale images, such as whole brain volumes, 
a large $L$ and $r$ are necessary (see experimental results provided in Section~\ref{ssec:abcd_subject} of the supplemental materials).  However, these experiments evaluate fits using data at moderate resolutions and signal-to-noise ratios more commonly encountered in diffusion MRI. Identifying the optimal approach to scale the method to full-brain fits from modern enormous ultra-high resolution research scans is an important direction for future research. An immediate straightforward approach is to simply fit multiple NODFs over some overlapping partition of the domain and then combine predictions \citep{tancik2022blocknerf}, though more sophisticated methods, e.g., utilizing hash grid encoding \citep{mueller2022}, can be explored.}
\par
Another important avenue for future research involves investigating the effect of the distribution of the measurement errors. While the additive normal model \eqref{eqn:stat_model_observed_data} for the noise is widely used in the statistical analysis of diffusion MRI and has been found to be quite reliable when the SNR $>3$ \citep{hakon1995}, it is often only an approximation to the true Rician/non-central $\chi^2$ noise distribution. It has been shown that some least-squares based estimates result in SNR-dependent non-vanishing bias of parameter estimates \citep{polzehl2016}. \newtext{Unfortunately, integrating the Rician likelihood into the proposed methodology is non-trivial, as we lose the conjugacy that allows for the analytic form of the posterior \eqref{eqn:function_space_posterior}. To address this issue practically, we suggest a two stage approach which first transforms the noisy Rician/non-central $\chi^2$ signals to noisy Gaussian signals \citep{koay2009noisefloor} and then performs analysis on the latter \citep{chen2019a}.} Additionally, many modern accelerated acquisition schemes induce spatially dependent measurement errors, e.g. from  parallel imaging or sub-sampling and interpolation \citep{griswold2002generalized,fernandez2015}, implying a non-identity covariance matrix in  Equation~\eqref{eqn:data_likelihood}. Incorporating this heteroskedasticity into the proposed method is an important future endeavor.

\section{Conclusion}\label{sec:conclusion}
This work introduces a novel modeling approach and algorithm for fully continuous estimation and uncertainty quantification of the spatially varying ODF field from diffusion MRI data that is robust to sparse angular sampling regimes and low SNRs. We use a latent function-valued random field model and parameterize its random series decomposition using a deep neural network, facilitating flexible data-driven modeling of the complex spatial dependence of the ODF field. Under our proposed model, a closed form conditional posterior predictive distribution is derived and used for facilitating fast ODF estimation and uncertainty quantification. To estimate the network parameters, we employ a penalized maximum likelihood learning scheme which allows us to encode explicit prior assumptions on the shape and the smoothness of the ODFs along with an automatic hyperparameter selection procedure using Bayesian optimization. A post-training calibration procedure is employed to estimate the model's variance parameters. Using extensive simulation studies, we show that our method is able to recover fields with lower error than competing approaches, as well as produce properly calibrated uncertainty quantification across a range of angular sample sizes and SNRs. On a real high resolution diffusion dataset, we demonstrate our method is able to provide superior performance under very sparse and noisy regimes.

\section{Acknowledgements}
The authors would like to acknowledge the following grants which supported this work: R01MH119222, R01MH125860, R01MH132610, R01NS125307.

\section{Software and Supplemental Material}
\newtext{\textbf{Supplement to ``Neural Orientation Distribution Fields for Estimation and Uncertainty Quantification in Diffusion MRI''} includes supporting technical details for the method and algorithm, proofs and derivations of all results, and additional experimental results. The implementation code is publicly accessible at: \textbf{https://github.com/Will-Consagra/NODF}.} 

\bibliographystyle{chicago}
\bibliography{refs}

\clearpage
\pagebreak

\begin{center}
{\large\bf Supplement to ``Neural Orientation Distribution Fields for Estimation and Uncertainty Quantification in Diffusion MRI''}
\end{center}

\setcounter{equation}{0}
\setcounter{figure}{0}
\setcounter{table}{0}
\setcounter{section}{0}
\setcounter{page}{1}
\makeatletter
\renewcommand{\theequation}{S\arabic{equation}}
\renewcommand{\thefigure}{S\arabic{figure}}
\renewcommand{\thetable}{S\arabic{table}}
\renewcommand{\thesection}{S\arabic{section}}
\renewcommand{\bibnumfmt}[1]{[S#1]}
\renewcommand{\citenumfont}[1]{S#1}
\renewcommand{\theequation}{S.\arabic{equation}}
\renewcommand{\thesection}{S\arabic{section}}
\renewcommand{\thesubsection}{S\arabic{section}.\arabic{subsection}}
\renewcommand{\thetable}{S\arabic{table}}
\renewcommand{\thefigure}{S\arabic{figure}}
\renewcommand{\thetheorem}{S\arabic{theorem}}
\renewcommand{\theproposition}{S\arabic{proposition}}
\renewcommand{\thelemma}{S\arabic{lemma}}
\renewcommand{\theassumption}{S\arabic{assumption}}

\section{Bayesian Optimization for Hyperparamter Tuning}\label{apx:BO_hyperparam_tuning}
Denote $x=(\lambda_c, \text{rest})\in \mathcal{X}$, where $\text{rest}$ refers to optional additional hyper-parameters, e.g. $\gamma$, $r$, learning rate, number of epochs, etc, to be selected and $\mathcal{X}$ is some predefined parameter range. Denote the partition of the observed data into disjoint sets 
$$
\{\boldsymbol{Y}, \boldsymbol{V}\} = \{\boldsymbol{Y}_{test}, \boldsymbol{V}_{test}\}\bigcup\{\boldsymbol{Y}_{train}, \boldsymbol{V}_{train}\}.
$$
Our hyperparameter optimization routine aims to identify the optimal hyperparameter configuration $x$ which, when used to maximize the penalized likelihood \eqref{eqn:penalized_likelihood} using $\{\boldsymbol{Y}_{train}, \boldsymbol{V}_{train}\}$, produces $(\widehat{\boldsymbol{\mu}}, \widehat{\boldsymbol{W}}, \widehat{\boldsymbol{\theta}})$ which maximize 
the data likelihood \eqref{eqn:data_likelihood} of 
$\{\boldsymbol{Y}_{test}, \boldsymbol{V}_{test}\}$, conditioned on $(\widehat{\boldsymbol{\mu}}, \widehat{\boldsymbol{W}}, \widehat{\boldsymbol{\theta}})$. We accomplish this using the framework of Bayesian optimization (BO) \citeSupp{snoek2012,frazier2018}, in which a surrogate Gaussian process is used to model the stochastic map $x \mapsto \mathcal{L}(x)$, where $\mathcal{L}$ denotes the data likelihood \eqref{eqn:data_likelihood} evaluated on the test set conditioned on parameters learned from setting $x$, along with an acquisition function which allows for exploration/exploitation trade-off. The map is evaluated using Algorithm~\ref{alg:eval_prediction_error}, where the stochasticity results from the data randomization in step 2 and the random initialization in the optimization step 3 of the procedure.
\begin{algorithm}[!ht]
  \caption{Algorithm to evaluate expensive stochastic function $\mathcal{L}$}
  \label{alg:eval_prediction_error}
  \begin{algorithmic}[1]
    \State \textbf{Input} Data $(\boldsymbol{Y},\boldsymbol{V})$, hyper-parameter setting $x$, train-test proportion split $p$
    \State Randomly partition data into disjoint sets 
    $\{\boldsymbol{Y}, \boldsymbol{V}\} = \{\boldsymbol{Y}_{test}, \boldsymbol{V}_{test}\}\bigcup\{\boldsymbol{Y}_{train}, \boldsymbol{V}_{train}\}$ according to split proportion $p$
    \State Form estimates 
         $(\widehat{\boldsymbol{\mu}}, \widehat{\boldsymbol{W}}, \widehat{\boldsymbol{\theta}})$ by maximizing \eqref{eqn:penalized_likelihood} using $\{\boldsymbol{Y}_{train}, \boldsymbol{V}_{train}\}$ under given hyper-parameter configuration $x$
         using stochastic gradient descent
    \State Return likelihood evaluation \eqref{eqn:data_likelihood} on $\{\boldsymbol{Y}_{test}, \boldsymbol{V}_{test}\}$ conditioned on $(\widehat{\boldsymbol{\mu}}, \widehat{\boldsymbol{W}}, \widehat{\boldsymbol{\theta}})$
\end{algorithmic}
\end{algorithm}
\par 
In our implementation, the surrogate GP prior modeling $\mathcal{L}(x)$ is specified with a constant mean function and a ARD Mat\'{e}rn 5/2 covariance kernel. Such design choices encode a-priori notations of an invariant mean, e.g. we don't know a priori which parts of the hyperparameter space will exhibit better performance, and the fact that closer pairs of points in the hyperparameter space should be more highly correlated. A simple and popular choice for the acquisition function is to sequentially maximize the candidates expected improvement, defined as  
\begin{equation}\label{eqn:EI_acquisition}
    \text{EI}_n(x) := \mathbb{E}[[\mathcal{L}(x) - \mathcal{L}_{n}(x)]^{+} | x_{1}, \mathcal{L}(x_{1}),...,x_{n}, \mathcal{L}(x_{n})] 
\end{equation}
where $\mathcal{L}_n(x):= \max_{m\le n} \mathcal{L}(x_{m})$ and $[a]^{+}=\text{max}(0,a)$ and the expectation is taken with respect to the GP posterior. This function can be evaluated in closed form under the GP surrogate model. The acquisition function then maximizes $\text{EI}_n$ to select the $n+1$'st candidate hyperparameter setting. Intuitively, this procedure selects the $x$ which we expect the largest difference between the yet unobserved value of $\mathcal{L}(x)$ and the current maximizer $\mathcal{L}(x)$. 
At the onset, an initial low-discrepancy sequence is generated in the parameter space and is used to estimate the hyperparameters of the GP emulators kernel.

\begin{algorithm}[!ht]
  \caption{BO Algorithm for hyper-parameter optimization}
  \label{alg:BO_hyperparameter_optimization}
  \begin{algorithmic}[1]
    \State \textbf{Input} Data $(\boldsymbol{Y},\boldsymbol{V})$, hyper-parameter ranges $\mathcal{X}$, max number of trials
    \State Generate $n_0$ points in hyper-parameter space chosen via initial low-discrepancy sequence
    \State Sample $(\mathcal{L}((x_1), ..., \mathcal{L}(x_{n_{0}}))$ 
    using Algorithm~\ref{alg:eval_prediction_error}
    \While {$n \le \text{max number of trials}$}
        \State \text{Set $x_{n}$ to be the maximizer of \eqref{eqn:EI_acquisition}}
        \State \text{Sample $\mathcal{L}(x_n)$ using Algorithm~\ref{alg:eval_prediction_error}}
        \State \text{$n \gets n+1$}
    \EndWhile
    \State \text{Return hyper-parameter configurion $x_m$ corresponding to the smallest $\mathcal{L}(x_m)$}
\end{algorithmic}
\end{algorithm}
\section{Proofs and Derivations}\label{apx:theory}
\textbf{Proof of Proposition \ref{prop:eigen_analysis_S2}}:
\begin{proof}
By Bochner's theorem, the stationarity of $C_{\mathcal{H}}$ implies we can compute the process spectral density function as 
$$
    \text{s}_{\gamma}(\omega) = \int \text{exp}(-i\omega t)h_{\mathbb{S}^2}(t)dt.
$$
Furthermore, it is well established that rotationally invariant spherical kernels commute with $\Delta_{\mathbb{S}^2}$, and hence share the same eigenfunctions, namely the spherical harmonics. Then
using the results of \citeSupp{solin2020}, we have the decomposition:
$$
Cor(\boldsymbol{p}_1,\boldsymbol{p}_2) = \sum_{k=0}^\infty s_{\gamma}(\sqrt{l_k(l_{k}+1)})\phi_k(\boldsymbol{p}_1)\phi_k(\boldsymbol{p}_2),
$$
where the odd harmonics are annihilated from the fact that the realizations of the process must be antipodally symmetric w.p.1. 
\end{proof}
\noindent{\textbf{Derivation of Equation~\eqref{eqn:function_space_posterior}}}
\par\bigskip
We begin by showing that, given the parametric representation \eqref{eqn:lin_field_model_params}, the prior assumptions laid out in Section~\ref{ssec:latent_density_model} can be translated to the following conditional weight-space prior on $\boldsymbol{W}$:
\begin{proposition}\label{prop:weight_space_prior}
Under~\eqref{eqn:lin_field_model_params}, the prior distribution 
\begin{equation}\label{eqn:weight_matrix_prior}
\boldsymbol{W}|\boldsymbol{\theta},\gamma, \sigma_w^2 \overset{\text{dist}}{=} \mathcal{MN}_{K\times r}(\boldsymbol{0},\boldsymbol{R}_{\gamma}^{-1}, \sigma_w^2\boldsymbol{I}_{r})
\end{equation}
implies a Gaussian process prior on $g_{\boldsymbol{v}}$ with reduced rank correlation function \eqref{eqn:zonal_mercer_kernel}, where $\sigma_w^2 > 0$ is a prior variance parameter.
\end{proposition}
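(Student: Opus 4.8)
The plan is to exploit the fact that, conditional on $\boldsymbol{\theta}$, the neural field $\boldsymbol{\xi}_{\boldsymbol{\theta}}(\boldsymbol{v})$ is deterministic, so that the parametric model \eqref{eqn:lin_field_model_params} is affine in $\boldsymbol{W}$ and the matrix-normal prior \eqref{eqn:weight_matrix_prior} pushes forward to a Gaussian law on the field whose first two moments can be read off directly. First I would write the centered field as $g_{\boldsymbol{v}}(\boldsymbol{p}) - \mu(\boldsymbol{v}) = \boldsymbol{\phi}^{\intercal}(\boldsymbol{p})\boldsymbol{W}\boldsymbol{\xi}_{\boldsymbol{\theta}}(\boldsymbol{v}) = \boldsymbol{\phi}^{\intercal}(\boldsymbol{p})\boldsymbol{c}(\boldsymbol{v})$, with $\mu(\boldsymbol{v}) = \boldsymbol{\mu}^{\intercal}\boldsymbol{\xi}_{\boldsymbol{\theta}}(\boldsymbol{v})$ fixed. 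Since $\phi_1,\dots,\phi_K$ are real, even-order spherical harmonics they lie in $\mathcal{H}$, so every realization of $g_{\boldsymbol{v}}$ does too and the hypotheses of Proposition~\ref{prop:eigen_analysis_S2} hold. For any finite collection of locations and directions, the stacked vector of evaluations of $g - \mu$ is a fixed linear image of $\text{vec}(\boldsymbol{W})$; under \eqref{eqn:weight_matrix_prior}, $\text{vec}(\boldsymbol{W}) \sim \mathcal{N}(\boldsymbol{0},\, \sigma_w^2\boldsymbol{I}_r \otimes \boldsymbol{R}_{\gamma}^{-1})$, so that vector is multivariate normal, and joint Gaussianity over $\boldsymbol{v}$ is inherited; hence $\{g_{\boldsymbol{v}}\}$ is, conditionally on $(\boldsymbol{\mu},\boldsymbol{\theta},\gamma,\sigma_w^2)$, a Gaussian process. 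Because $\mathbb{E}[\boldsymbol{W}] = \boldsymbol{0}$, its mean function is exactly $\mu(\boldsymbol{v})$.

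Next I would compute the angular covariance using the matrix-normal moment identity: for $\boldsymbol{W} \sim \mathcal{MN}_{K\times r}(\boldsymbol{0}, \boldsymbol{R}_{\gamma}^{-1}, \sigma_w^2\boldsymbol{I}_r)$ and any $\boldsymbol{A}\in\mathbb{R}^{r\times r}$ one has $\mathbb{E}[\boldsymbol{W}\boldsymbol{A}\boldsymbol{W}^{\intercal}] = \sigma_w^2\,\text{trace}(\boldsymbol{A})\,\boldsymbol{R}_{\gamma}^{-1}$. Taking $\boldsymbol{A} = \boldsymbol{\xi}_{\boldsymbol{\theta}}(\boldsymbol{v}_1)\boldsymbol{\xi}_{\boldsymbol{\theta}}^{\intercal}(\boldsymbol{v}_2)$ gives $\text{Cov}(\boldsymbol{c}(\boldsymbol{v}_1),\boldsymbol{c}(\boldsymbol{v}_2)) = \sigma_w^2\,\langle\boldsymbol{\xi}_{\boldsymbol{\theta}}(\boldsymbol{v}_1),\boldsymbol{\xi}_{\boldsymbol{\theta}}(\boldsymbol{v}_2)\rangle\,\boldsymbol{R}_{\gamma}^{-1}$ (whose trace reproduces the spatial covariance identity $C_{\Omega}(\boldsymbol{v}_1,\boldsymbol{v}_2) = \sum_k \mathbb{E}[c_k(\boldsymbol{v}_1)c_k(\boldsymbol{v}_2)]$ of Section~\ref{ssec:latent_density_model}). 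Sandwiching with $\boldsymbol{\phi}$ at $\boldsymbol{v}_1 = \boldsymbol{v}_2 = \boldsymbol{v}$, and using that $\boldsymbol{R}_{\gamma}^{-1} = \text{Diag}(\text{s}_{\gamma}(\sqrt{l_1(l_1+1)}),\dots,\text{s}_{\gamma}(\sqrt{l_K(l_K+1)}))$, yields
\[
\text{Cov}\big(g_{\boldsymbol{v}}(\boldsymbol{p}_1),g_{\boldsymbol{v}}(\boldsymbol{p}_2)\big) = \sigma_w^2\|\boldsymbol{\xi}_{\boldsymbol{\theta}}(\boldsymbol{v})\|_2^2 \sum_{k=1}^{K}\text{s}_{\gamma}\big(\sqrt{l_k(l_k+1)}\big)\,\phi_k(\boldsymbol{p}_1)\phi_k(\boldsymbol{p}_2),
\]
i.e. the rank-$K$ truncation of the zonal kernel in \eqref{eqn:zonal_mercer_kernel}, scaled by the spatial amplitude $\sigma_w^2\|\boldsymbol{\xi}_{\boldsymbol{\theta}}(\boldsymbol{v})\|_2^2$.

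Finally I would pass from this covariance to the correlation stated in \eqref{eqn:zonal_mercer_kernel}. By the spherical-harmonic addition theorem, $\sum_{k:\,l_k=l}\phi_k(\boldsymbol{p})^2 = (2l+1)/(4\pi)$ is independent of $\boldsymbol{p}$, so the marginal variance $\text{Var}(g_{\boldsymbol{v}}(\boldsymbol{p})) = \sigma_w^2\|\boldsymbol{\xi}_{\boldsymbol{\theta}}(\boldsymbol{v})\|_2^2\,\kappa_0$, with $\kappa_0 := \sum_{l\text{ even}}\text{s}_{\gamma}(\sqrt{l(l+1)})(2l+1)/(4\pi)$, is constant in $\boldsymbol{p}$, as required of a rotationally invariant process; dividing the covariance by it produces precisely the reduced-rank correlation function \eqref{eqn:zonal_mercer_kernel}, the normalization $\kappa_0 = 1$ being the same convention on $\text{s}_{\gamma}$ already used in Proposition~\ref{prop:eigen_analysis_S2}. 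Combining this with the Gaussianity established in the first step finishes the argument. I expect the main obstacles to be purely matters of care rather than depth: (i) keeping the matrix-normal conventions consistent --- which Kronecker factor is the among-rows versus among-columns covariance, and correspondingly which matrix the trace lands on, since an inconsistent choice swaps $\boldsymbol{R}_{\gamma}^{-1}$ and $\sigma_w^2\boldsymbol{I}_r$ and yields the wrong angular kernel --- and (ii) the reconciliation of the induced \emph{covariance} with the stated \emph{correlation}, which is exactly where the addition theorem is needed to see that the extra angular factor $\kappa_0$ is a genuine constant and can be absorbed into the spectral-density normalization.
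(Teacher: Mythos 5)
Your proof is correct and follows essentially the same route as the paper's: condition on $\boldsymbol{\theta}$ so the field is affine in $\boldsymbol{W}$, apply the matrix-normal second-moment identity $\mathbb{E}[\boldsymbol{W}\boldsymbol{A}\boldsymbol{W}^{\intercal}] = \sigma_w^2\,\mathrm{trace}(\boldsymbol{A})\,\boldsymbol{R}_{\gamma}^{-1}$, and sandwich with $\boldsymbol{\phi}$ to obtain the rank-$K$ zonal kernel. You are in fact slightly more careful than the paper, which neither spells out the joint Gaussianity of the pushforward nor justifies dropping the scalar factor $\sigma_w^2\|\boldsymbol{\xi}_{\boldsymbol{\theta}}(\boldsymbol{v})\|_2^2$ when passing from covariance to correlation, whereas you supply both: the affine-image argument for Gaussianity and the addition-theorem argument that the marginal variance is direction-independent, so normalizing by it yields exactly the stated reduced-rank correlation.
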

\begin{proof}
Given the model \eqref{eqn:lin_field_model_params}, it follows that 
$$
\mathbb{E}[g(\boldsymbol{v},\boldsymbol{p})] = \boldsymbol{\mu}^{\intercal}\boldsymbol{\xi}_{\boldsymbol{\theta}}(\boldsymbol{v}),
$$
and 
$$
\begin{aligned}
    \text{Cov}(g(\boldsymbol{v},\boldsymbol{p}_1),g(\boldsymbol{v},\boldsymbol{p}_2)) &= \mathbb{E}[\boldsymbol{\phi}^{\intercal}(\boldsymbol{p}_1)\boldsymbol{W}\boldsymbol{\xi}_{\boldsymbol{\theta}}(\boldsymbol{v})][\boldsymbol{\xi}^{\intercal}_{\boldsymbol{\theta}}(\boldsymbol{v})\boldsymbol{W}^{\intercal}]\boldsymbol{\phi}(\boldsymbol{p}_2)]\\
    &= \boldsymbol{\phi}^{\intercal}(\boldsymbol{p}_1)\mathbb{E}[\boldsymbol{W}\boldsymbol{\xi}_{\boldsymbol{\theta}}(\boldsymbol{v})\boldsymbol{\xi}^{\intercal}_{\boldsymbol{\theta}}(\boldsymbol{v})\boldsymbol{W}^{\intercal}]\boldsymbol{\phi}(\boldsymbol{p}_2)\\
    &=  \boldsymbol{\phi}^{\intercal}(\boldsymbol{p}_1)[\boldsymbol{R}^{-1}_{\gamma}\text{trace}(\sigma_w^2\boldsymbol{I}_r\boldsymbol{\xi}_{\boldsymbol{\theta}}(\boldsymbol{v})\boldsymbol{\xi}^{\intercal}_{\boldsymbol{\theta}}(\boldsymbol{v}))]\boldsymbol{\phi}(\boldsymbol{p}_2)\\
    &= \sigma_w^2\boldsymbol{\xi}^{\intercal}_{\boldsymbol{\theta}}(\boldsymbol{v})\boldsymbol{\xi}_{\boldsymbol{\theta}}(\boldsymbol{v})\boldsymbol{\phi}^{\intercal}(\boldsymbol{p}_1)\boldsymbol{R}_{\gamma}^{-1}\boldsymbol{\phi}^{\intercal}(\boldsymbol{p}_2), \\
\end{aligned}
$$
where the expectation are taken w.r.t. to the prior \eqref{eqn:weight_matrix_prior}. Hence the correlation function of $g_{\boldsymbol{v}}$ is $ \boldsymbol{\phi}^{\intercal}(\boldsymbol{p}_1)\boldsymbol{R}_{\gamma}^{-1}\boldsymbol{\phi}^{\intercal}(\boldsymbol{p}_2)$, a rank $K$ approximation to \eqref{eqn:zonal_mercer_kernel}, as desired to show. 
\end{proof}
Denote the centered data
$\boldsymbol{Y}^{(c)}:=\boldsymbol{Y}-\boldsymbol{1}_{M}\boldsymbol{\mu}^{\intercal}\boldsymbol{\Xi}_{\boldsymbol{\theta}}$. Using properties of the matrix normal distribution and the well known identity: 
$
\text{vec}(\boldsymbol{\Phi}_{G}\boldsymbol{W}\boldsymbol{\Xi}_{\boldsymbol{\theta}}) = [\boldsymbol{\Xi}_{\boldsymbol{\theta}}^{\intercal}\otimes\boldsymbol{\Phi}_{G}]\text{vec}(\boldsymbol{W}),
$ 
where $\text{vec}$ is the vectorization operator and $\otimes$ is the Kronecker product, the likelihood \eqref{eqn:data_likelihood} and prior \eqref{eqn:weight_matrix_prior} can be vectorized into the following two-stage hierarchical model:
\begin{equation}\label{eqn:two_stage_HBM}
\begin{aligned}
    &\text{vec}(\boldsymbol{Y}^{(c)})|\boldsymbol{V}, \text{vec}(\boldsymbol{W}), \boldsymbol{\theta}, \boldsymbol{\mu}, \sigma_e^2  \sim \mathcal{N}_{Mn}([\boldsymbol{\Xi}_{\boldsymbol{\theta}}^{\intercal}\otimes\boldsymbol{\Phi}_{G}]\text{vec}(\boldsymbol{W}), \sigma_e^2\boldsymbol{I}_{MN}) \\
    &\text{vec}(\boldsymbol{W})|\boldsymbol{\theta}, \sigma^2_{w}, \gamma \sim \mathcal{N}_{Kr}(\boldsymbol{0}, \sigma^2_{w}\boldsymbol{I}_{r}\otimes\boldsymbol{R_{\gamma}}^{-1}).
\end{aligned}
\end{equation}
Using Gaussian-Gaussian conjugacy, we have the closed form conditional posterior:
\begin{equation}\label{eqn:weight_posterior}
\text{vec}(\boldsymbol{W}) | \boldsymbol{V}, \boldsymbol{Y}, \boldsymbol{\theta}, \boldsymbol{\mu}, \gamma, \sigma_w^2,  \sigma_e^2 \sim \mathcal{N}_{Kr}(\frac{1}{\sigma_{e}^2}\boldsymbol{\Lambda}_{\boldsymbol{\theta}}^{-1}[\boldsymbol{\Xi}_{\boldsymbol{\theta}}^{\intercal}\otimes\boldsymbol{\Phi}_{G}]^{\intercal}\text{vec}(\boldsymbol{Y}^{(c)}), \boldsymbol{\Lambda}_{\boldsymbol{\theta}}^{-1})
\end{equation}
where 
\begin{equation}\label{eqn:basis_covariance_matrix}
\begin{aligned}
    \boldsymbol{\Lambda}_{\boldsymbol{\theta}} = \frac{1}{\sigma^2_{e}}(\frac{\sigma^2_{e}}{\sigma^2_{w}}\boldsymbol{I}_{r}\otimes\boldsymbol{R} +  \boldsymbol{\Xi}_{\boldsymbol{\theta}}\boldsymbol{\Xi}_{\boldsymbol{\theta}}^{\intercal}\otimes\boldsymbol{\Phi}_{G}^{\intercal}\boldsymbol{\Phi}_{G}).
    \end{aligned}
\end{equation}
Due to the closure of the normal distribution under linear transformation, the posterior over the weights~\eqref{eqn:weight_posterior} induces the following multivariate normal posterior over the coefficient field 
\begin{equation}\label{eqn:coef_field_posterior}
\begin{aligned}
    \boldsymbol{c}(\boldsymbol{v}) | \boldsymbol{V}, \boldsymbol{Y}, \boldsymbol{\theta}, \boldsymbol{\mu}, \gamma, \sigma_w^2,  \sigma_e^2 \sim \mathcal{N}_{K}(&\frac{1}{\sigma_{e}^2}[\boldsymbol{\xi}_{\boldsymbol{\theta}}^{\intercal}(\boldsymbol{v})\otimes\boldsymbol{I}_{K}]\boldsymbol{\Lambda}_{\boldsymbol{\theta}}^{-1}[\boldsymbol{\Xi}_{\boldsymbol{\theta}}^{\intercal}\otimes\boldsymbol{\Phi}_{G}]^{\intercal}\text{vec}(\boldsymbol{Y}^{(c)}), \\
    &[\boldsymbol{\xi}_{\boldsymbol{\theta}}^{\intercal}(\boldsymbol{v})\otimes\boldsymbol{I}_{K}]\boldsymbol{\Lambda}_{\boldsymbol{\theta}}^{-1}[\boldsymbol{\xi}_{\boldsymbol{\theta}}^{\intercal}(\boldsymbol{v})\otimes\boldsymbol{I}_{K}]^{\intercal}).
    \end{aligned}
\end{equation}
Recall that we assume the conditional predictive distribution of the mean function is:
$\mu(\boldsymbol{v}) |\boldsymbol{\theta}, \boldsymbol{\mu}, \sigma_{\mu}^2 \sim \mathcal{N}(\boldsymbol{\mu}^{\intercal}\boldsymbol{\xi}_{\boldsymbol{\theta}}(\boldsymbol{v}), \sigma_\mu^2)$,
and independence: $\mu(\boldsymbol{v})\perp \boldsymbol{c}(\boldsymbol{v}) | \boldsymbol{\theta}$. Coupling this with \eqref{eqn:coef_field_posterior} and using the properties of independent Gaussians, it is straight forward to derive the joint distribution:  
$$
\begin{aligned}
    \begin{pmatrix}
    \mu(\boldsymbol{v}) \\
    \boldsymbol{c}(\boldsymbol{v}))
    \end{pmatrix} | \boldsymbol{V}, \boldsymbol{Y}, \boldsymbol{\theta}, \boldsymbol{\mu}, \gamma, \sigma_w^2,  \sigma_e^2, \sigma_\mu^2 \sim \mathcal{N}\big(\begin{bmatrix}
        \boldsymbol{\mu}^{\intercal}\boldsymbol{\xi}_{\boldsymbol{\theta}}(\boldsymbol{v}) \\
        \frac{1}{\sigma_{e}^2}[\boldsymbol{\xi}_{\boldsymbol{\theta}}^{\intercal}(\boldsymbol{v})\otimes\boldsymbol{I}_{K}]\boldsymbol{\Lambda}_{\boldsymbol{\theta}}^{-1}[\boldsymbol{\Xi}_{\boldsymbol{\theta}}^{\intercal}\otimes\boldsymbol{\Phi}_{G}]^{\intercal}\text{vec}(\boldsymbol{Y}^{(c)}),
    \end{bmatrix} \\
    \begin{bmatrix}
                    \sigma_u^2 & \boldsymbol{0} \\
                    \boldsymbol{0} & [\boldsymbol{\xi}_{\boldsymbol{\theta}}^{\intercal}(\boldsymbol{v})\otimes\boldsymbol{I}_{K}]\boldsymbol{\Lambda}_{\boldsymbol{\theta}}^{-1}[\boldsymbol{\xi}_{\boldsymbol{\theta}}^{\intercal}(\boldsymbol{v})\otimes\boldsymbol{I}_{K}]^{\intercal}
    \end{bmatrix} \big).
    \end{aligned}
$$
Given that 
$$
g(\boldsymbol{v},\boldsymbol{p}) = \mu(\boldsymbol{v}) + \boldsymbol{c}^{\intercal}(\boldsymbol{v})\boldsymbol{\phi}(\boldsymbol{p}) =  \begin{pmatrix} 1 & \boldsymbol{\phi}(\boldsymbol{p}) \end{pmatrix}    \begin{pmatrix}
    \mu(\boldsymbol{v}) \\
    \boldsymbol{c}(\boldsymbol{v}))
    \end{pmatrix},
$$
using standard properties of the covariance, it follows that 
$$
\begin{aligned}
    g(\boldsymbol{v},\cdot)| \boldsymbol{V}, \boldsymbol{Y}, \boldsymbol{\theta}, \boldsymbol{\mu}, \gamma, \sigma_w^2,  \sigma_e^2, \sigma_\mu^2 \sim \mathcal{GP}\big(&\boldsymbol{\xi}_{\boldsymbol{\theta}}^{\intercal}(\boldsymbol{v})\boldsymbol{\mu} + \boldsymbol{\phi}^{\intercal}(\boldsymbol{p})\frac{1}{\sigma_{e}^2}[\boldsymbol{\xi}_{\boldsymbol{\theta}}^{\intercal}(\boldsymbol{v})\otimes\boldsymbol{I}_{K}]\boldsymbol{\Lambda}_{\boldsymbol{\theta}}^{-1}[\boldsymbol{\Xi}_{\boldsymbol{\theta}}^{\intercal}\otimes\boldsymbol{\Phi}_{G}]^{\intercal}\text{vec}(\boldsymbol{Y}^{(c)}), \\ &\sigma_{\mu}^2 + \boldsymbol{\phi}^{\intercal}(\boldsymbol{p}_1)[\boldsymbol{\xi}_{\boldsymbol{\theta}}^{\intercal}(\boldsymbol{v})\otimes\boldsymbol{I}_{K}]\boldsymbol{\Lambda}_{\boldsymbol{\theta}}^{-1}[\boldsymbol{\xi}_{\boldsymbol{\theta}}^{\intercal}(\boldsymbol{v})\otimes\boldsymbol{I}_{K}]^{\intercal}\boldsymbol{\phi}(\boldsymbol{p}_2)\big),
\end{aligned}
$$
which is \eqref{eqn:function_space_posterior}, as desired. 
\par\bigskip
\noindent{\textbf{Invertibility of the Funk-Radon Transform}}
\begin{proposition}\label{prop:FR_invertibility}
    The Funk-Radon transform, denoted $\mathcal{G}^{-1}$, is invertible on $\text{span}\left(\{\phi_0, ..., \phi_K\}\right)\subset\mathcal{H}$, for any $K$.
\end{proposition}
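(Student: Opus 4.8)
The plan is to reduce invertibility to the spectral diagonalization of the Funk--Radon transform over the real-symmetric spherical harmonics, which was already recalled in Section~\ref{sec:overview}. Since each $\phi_k$ is an eigenfunction of $\mathcal{G}^{-1}$ with eigenvalue $2\pi P_{l_k}(0)$, the restriction of $\mathcal{G}^{-1}$ to the finite-dimensional space $\mathrm{span}(\{\phi_0,\dots,\phi_K\})$ is represented in this basis by the diagonal matrix $\mathrm{diag}\big(2\pi P_{l_0}(0),\dots,2\pi P_{l_K}(0)\big)$. A diagonal linear map on a finite-dimensional space is invertible if and only if all its diagonal entries are nonzero, so the entire claim collapses to showing $P_{l_k}(0)\neq 0$ for every $k$.

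The key step is therefore the evaluation of the Legendre polynomials at the origin. Because realizations lie in $\mathcal{H}$ and the odd harmonics are annihilated (equivalently, only even orders $l=0,2,4,\dots$ are indexed in \eqref{eqn:real_sph_harm}), it suffices to treat $l=2m$. I would invoke the generating function $\sum_{l\ge 0}P_l(x)t^l=(1-2xt+t^2)^{-1/2}$ evaluated at $x=0$, giving $(1+t^2)^{-1/2}=\sum_{m\ge 0}\binom{-1/2}{m}t^{2m}$, so that $P_{2m}(0)=\binom{-1/2}{m}=(-1)^m\binom{2m}{m}4^{-m}\neq 0$ while $P_{2m+1}(0)=0$. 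The vanishing in the odd case is consistent with, and is precisely the reason for, restricting attention to $\mathcal{H}$; on the even-order subspace all eigenvalues are nonzero.

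Combining these, on $\mathrm{span}(\{\phi_0,\dots,\phi_K\})$ the matrix of $\mathcal{G}^{-1}$ is diagonal with all entries nonzero, hence invertible, and its inverse $\mathcal{G}$ is the diagonal operator with entries $[2\pi P_{l_k}(0)]^{-1}$, matching the matrix $\boldsymbol{G}$ introduced in Section~\ref{ssec:statistical_model}; since $K$ was arbitrary, the conclusion holds for every $K$. The only mildly delicate point is the nonvanishing of $P_{2m}(0)$ for all $m$; everything else is bookkeeping of the eigendecomposition. Alternatively one could simply cite the general continuous inversion result of \cite{quellmalz2020} and observe that the finite-rank restriction is the corresponding diagonal submatrix, but the elementary argument above is self-contained.
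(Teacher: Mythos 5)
Your proof is correct and follows essentially the same route as the paper's: diagonalize the Funk--Radon transform over the real-symmetric spherical harmonics via the Funk--Hecke theorem and observe that the restriction to $\mathrm{span}(\{\phi_0,\dots,\phi_K\})$ is a diagonal map with nonzero entries $2\pi P_{l_k}(0)$. The one point where you go beyond the paper is the explicit verification that $P_{2m}(0)=(-1)^m\binom{2m}{m}4^{-m}\neq 0$ via the generating function --- the paper simply asserts the eigenvalues are nonzero --- which is a worthwhile detail since the vanishing of $P_{2m+1}(0)$ is exactly why the restriction to the even-order subspace $\mathcal{H}$ matters.
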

\begin{proof}
    For any $h\in\mathcal{H}$, we have the expansion
    $$
        h = \sum_{k=0}^\infty a_k\phi_k \approx \sum_{k=0}^K a_k\phi_k 
    $$
    for some $l2$-convergent sequence of coefficient $\{a_k\}$. Using the Funk-Hecke theorem, it can be shown that the real-symmetric spherical harmonics $\{\phi_k\}$ are non-zero eigenfunctions of $\mathcal{G}^{-1}$, with corresponding eigenvalue $2\pi P_{l_{k}}(0)$, where $P_{l_{k}}(0)$ is the Legendre polynomial of degree $l_{k}$, corresponding to the degree of $\phi_k$. Coupling this with the linearity of $\mathcal{G}^{-1}$, it is easy to show that 
    $$
    \mathcal{G}^{-1}(h) = \mathcal{G}^{-1}\left(\sum_{k=0}^\infty a_k\phi_k\right) = \sum_{k=0}^\infty 2\pi P_{l_{k}}(0) a_k\phi_k \approx \sum_{k=0}^K 2\pi P_{l_{k}}(0) a_k\phi_k
    $$
    Then $\mathcal{G}$ restricted to $\text{span}\left(\{\phi_0, ..., \phi_K\}\right)$ is the linear transform with spectrum $\{\phi_k,[2\pi P_{l_{k}}(0)]^{-1}\}$, which is unique. 
\end{proof}
The above analysis can be extended to infinite $K$ by assuming some stronger smoothness class than $L^2(\mathbb{S})$, but is beyond the scope of interest here as we always assume finite rank $K$ approximation.

\section{Additional Implementation Details and Experimental Results}\label{apx:additional_results}

\subsection{Brief Overview of the Bootstrap}\label{ssec:bootstrap_discussion}

In this section, we provide details on the residual bootstrap procedure. Specifically, a pilot estimator is first fit using a ridge regression on the raw or local-PCA-based smoothed signals via \eqref{eqn:ridge_regression}. The residual $\hat{\boldsymbol{\epsilon}}_i = \boldsymbol{y}_i - \boldsymbol{\Phi}\hat{\boldsymbol{c}}_{\boldsymbol{v}_{i}}$ under model~\eqref{eqn:stat_model_observed_data} has second order moment $\mathbb{E}[\hat{\boldsymbol{\epsilon}}_i^2] = \sigma_e^2d_i$, where $d_i = [(\boldsymbol{I} - \boldsymbol{H}_{\lambda})(\boldsymbol{I} - \boldsymbol{H}_{\lambda})]_{ii}$, and $\boldsymbol{H}_{\lambda}$ is the smoothing matrix of the ridge regression~\eqref{eqn:ridge_regression}. Therefore, we implement the residual adjustment $\tilde{\boldsymbol{\epsilon}}_i = \hat{\boldsymbol{\epsilon}}_i/\sqrt{d_i}$, which helps mitigate the small sample bias of the bootstrap. Bootstrapped samples $\boldsymbol{\epsilon}^{*}_i$ are drawn with replacement from  $\tilde{\boldsymbol{\epsilon}}_i$ and used to construct bootstrapped signals $\boldsymbol{y}^{*}_i$, and ultimately bootstrapped coefficient estimates $\boldsymbol{c}^{*}_{\boldsymbol{v}_{i}}$ via applying SHLS-Raw or  SHLS-MPPCA to $\boldsymbol{y}^{*}_i$, which are then used for inference. 
\par 
We clarify that for SHLS-MPPCA, the local PCA-based low-rank model was integrated into the bootstrapping procedure. Specifically, the algorithm is given by:
\begin{enumerate}
    \item Apply the MPPCA to the raw data and form a pilot estimator via \eqref{eqn:ridge_regression} on the smoothed data.
    \item Generate residuals from the raw data by subtracting the pilot estimator from step 1.
    \item Resample with replacement from the rescaled  the residuals.
    \item Apply MPPCA on each bootstrapped sample and refit \eqref{eqn:ridge_regression} on the smoothed data.
\end{enumerate}
We notice that for bootstrapping MPPCA type denoising approaches, it is typical in the literature to apply  a repetition bootstrap, i.e. acquiring multiple measurements per $\boldsymbol{p}_m$ and resampling among them for all gradient directions independently \citep{veraart2016,schilling2023}. However, this of course requires repeated measures, which are often not collected and are infeasible in the super sparse acquisition regime due to the increase in scanning time, and hence this approach was not considered here.
\subsection{A Note on Computation}\label{ssec:computation_bootstrap}
We now discuss some computational details of the inference procedures. All experiments were run on a Linux machine equipped with a NVIDIA Titan
RTX GPU with 24GB of RAM. The dipy implementation of MPPCA took on average $\approx$ 23 seconds to run on the 2D ROI considered in Section~\ref{ssec:real_data}. The cost of inference by bootstrapping the whole procedure scales linearly in $B$, with $B$ being the number of bootstrapped samples. Generating a reasonable number of samples for inference, say $B=500$ as is used in our analysis, takes $\approx 3.2$ hours. It is often the case that we are interested in quantifying the uncertainty in a very computationally expensive downstream procedure, e.g. tractography, or for many subjects, rendering such time demands undesirable in practice. As has been noted previously \citep{jens2018}, this underscores the issues with bootstrap-based uncertainty quantification in large scale medical imaging analysis for all but the fastest of estimators, e.g. SHLS~\eqref{eqn:ridge_regression} which has a closed from analytic solution and thus can be computed rapidly, due to the scale of the data. Finally, we note that as MPPCA is relatively fast compared to alternative denoising approaches, the issue of using bootstrap-based uncertainty quantification would only be further exacerbated for more computationally intensive denoising approaches, e.g. deep learning based, if they do not allow for native uncertainty quantification. 
\par 
For NODF, the major computational expenditure in Algorithm~\ref{alg:inferece_algo} is in retraining the neural field multiple times in the hyperparameter optimization scheme in step 2. For example, for the largest neural field with $r=256$ and $L=3$ trained on the 2D ROI, training took $\approx$ 21 seconds. In our analysis, we set the total number of trials in Algorithm~\ref{alg:BO_hyperparameter_optimization} to be 20, hence obtaining the final estimate $\widehat{\boldsymbol{\theta}}$ took $\approx 7$ minutes. Once this step has been completed, the remainder of the procedure is computationally trivial. The calibration set in step 4 can be formed with a small number of voxels ($\approx 50\text{ to }100$) and the 2D grid of candidate optimizing values 
can also be made small, e.g. we found success using a marginal grid of 5 values for each variance parameter, so solving \eqref{eqn:calibration_likelihood_optimization} is fast. Forming the inverse \eqref{eqn:basis_covariance_matrix} is fast ($<0.5$ seconds for $K=45$, $r=256$) and only needs to be done once and is then stored. As the posterior is defined continuously, the remaining cost of inference is forming the mean and covariance function defining \eqref{eqn:function_space_posterior}, which only require (relatively small) matrix multiplications and are therefore fast. In contrast, proper bootstrap-based uncertainty estimates ``off the grid'' would require incorporating whatever interpolation scheme was employed in the resampled estimates, further exacerbating the computational issues. 
\par 
We conclude with two remarks on the practical implementation of our procedure. If predictions need to be formed for an exceedingly large collection of voxels, these computations are fully parallelizable over this set and thus can be run concurrently. Additionally, if it is desirable to analyze the brain images for many subjects from a single acquisition protocol, e.g. the same $M$, $b$-value, $\sigma_e^2$, etc., the hyperparameter selection can be run once and then fixed for the remaining subjects. 

\subsection{Additional Results from Synthetic and Real Data Experiments}
\begin{table}[!ht]
    \centering
    \scriptsize
\begin{tabular}{llrrrrrr}
\toprule
    &    & \multicolumn{3}{l}{M=10} & \multicolumn{3}{l}{M=60} \\
    &    &    NODF & SHLS-Raw & SHLS-MPPCA &    NODF & SHLS-Raw & SHLS-MPPCA \\
 &  &         &          &            &         &          &            \\
\midrule
ODF & ECP & 5.00e-04 & 1.20e-03 &   2.90e-03 & 1.00e-03 & 3.20e-03 &   3.00e-04 \\
    & IL & 3.00e-04 & 1.00e-04 &   4.00e-04 & 3.00e-04 & 3.40e-03 &   2.00e-04 \\
    & $L^2$-Error & 3.00e-04 & 4.00e-04 &   3.00e-04 & 3.00e-04 & 2.00e-04 &   2.00e-04 \\
GFA & ECP & 1.70e-03 & 4.60e-03 &   2.61e-02 & 3.10e-03 & 5.80e-03 &   3.10e-03 \\
    & IL & 3.00e-04 & 1.00e-04 &   5.00e-04 & 5.00e-04 & 6.20e-03 &   1.00e-04 \\
    & Abs. Error & 4.00e-04 & 4.00e-04 &   1.40e-03 & 3.00e-04 & 1.10e-03 &   2.00e-04 \\
    & Bias & 4.00e-04 & 4.00e-04 &   1.50e-03 & 2.00e-04 & 2.00e-03 &   2.00e-04 \\
\bottomrule
\end{tabular}

    \caption{Standard errors for the Monte-Carlo simulation averages reported in Table~\ref{tab:2d_crossing_analysis_qualitative}.}
\label{tab:sim_2d_results_standard_errors}
\end{table}

\begin{table}[!ht]
    \centering
    \addtolength{\leftskip} {-2cm}
    \addtolength{\rightskip}{-2cm}
    \scriptsize
\begin{tabular}{llrrrrrrrrr}
\toprule
   &    & \multicolumn{3}{l}{NODF} & \multicolumn{3}{l}{SHLS-Raw} & \multicolumn{3}{l}{SHLS-MPPCA} \\
   &    & $L^2$ & ECP & IL & $L^2$ & ECP & IL  & $L^2$ & ECP & IL \\
\midrule
10 & 20 &     1.52e-03 &       1.06e-04 &     3.14e-04 &     1.25e-04 &       2.15e-04 &     7.70e-06 &     1.64e-04 &        5.64e-04 &     2.64e-05 \\
   & 10 &     9.19e-04 &       1.17e-04 &     1.04e-03 &     2.79e-04 &       3.79e-04 &     6.23e-05 &     3.79e-04 &      6.44e-04 &     5.39e-05 \\
20 & 20 &     5.91e-04 &       1.64e-04 &     6.37e-04 &     1.06e-04 &       1.42e-04 &     3.13e-05 &     1.17e-04 &       5.07e-04 &     2.18e-05 \\
   & 10 &     6.40e-04 &       5.35e-04 &     1.30e-03 &     2.25e-04 &       1.46e-04 &     6.25e-05 &     2.00e-04 &        5.26e-04 &     5.06e-05 \\
30 & 20 &     6.41e-04 &       5.17e-04 &     6.90e-04 &     8.45e-05 &       7.09e-05 &     2.59e-05 &     9.02e-05 &       4.06e-04 &     2.33e-05 \\
   & 10 &     5.65e-04 &       9.38e-04 &     6.71e-04 &     1.83e-04 &       6.62e-05 &     5.23e-05 &     1.54e-04 &      4.51e-04 &     5.47e-05 \\
40 & 20 &     3.37e-04 &       6.63e-04 &     2.38e-04 &     6.27e-05 &       4.32e-05 &     1.84e-05 &     8.46e-05 &      4.53e-04 &     2.31e-05 \\
   & 10 &     4.98e-04 &       1.10e-03 &     2.17e-05 &     1.32e-04 &       3.49e-05 &     3.67e-05 &     1.71e-04 &     4.85e-04 &     5.12e-05 \\
50 & 20 &     4.40e-04 &       1.07e-03 &     1.20e-03 &     5.92e-05 &       3.69e-05 &     1.92e-05 &     7.93e-05 &       1.30e-03 &     4.18e-04 \\
   & 10 &     4.37e-04 &       1.34e-03 &     8.47e-04 &     1.24e-04 &       1.57e-05 &     3.90e-05 &     1.29e-04 &     2.90e-04 &     1.01e-04 \\
60 & 20 &     2.51e-04 &       7.20e-04 &     1.99e-04 &     5.10e-05 &       2.62e-05 &     1.86e-05 &     5.79e-05 &       8.82e-04 &     3.54e-04 \\
   & 10 &     4.04e-04 &       1.21e-03 &     8.50e-04 &     1.10e-04 &       8.72e-06 &     3.58e-05 &     1.18e-04 &       2.08e-04 &     8.91e-05 \\
\bottomrule
\end{tabular}
\caption{Standard errors for the Monte-Carlo simulation averages reported in Table~\ref{tab:sim_3d_pointwise_results}.}
\label{tab:sim_3d_results_standard_errors}
\end{table}
Tables~\ref{tab:sim_2d_results_standard_errors} and \ref{tab:sim_3d_results_standard_errors} provide the standard errors for the MC averages reported in Tables~\ref{tab:2d_crossing_analysis_qualitative} and Table~\ref{tab:sim_3d_pointwise_results}, respectively. 

\subsection{ABCD Data}\label{ssec:abcd_subject}
\begin{figure}[!ht]
    \centering
    \includegraphics[scale=0.62]{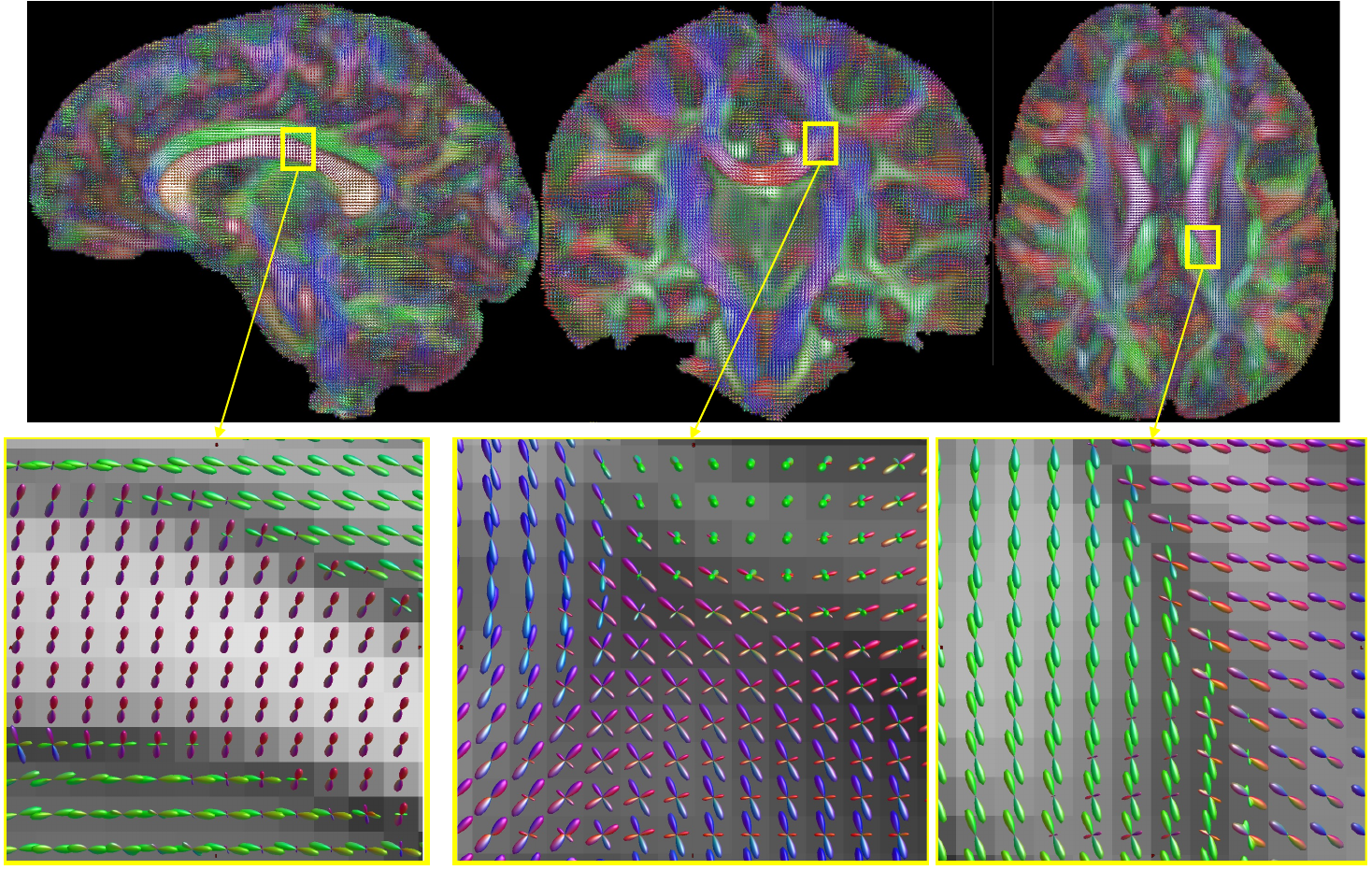}
    \caption{\newtext{Sagital (left), coronal (middle) and axial (right) views of full-brain NODF fits for a randomly selected ABCD subject. ODFs were sharpened with CSD to enhance directional information. Background is colored by estimated GFA.}}
    \label{fig:abcd_results}
\end{figure}
NODF's strong performance in the low SNR, sparse sample regime considered in Section~\ref{sec:experiments} is due to the deep basis functions ability to learn spatial correlations in the field and share this information to improve statistical efficiency in prediction. It is also important to validate that our method can perform reliable estimation of the field in the relatively high SNR and dense sampling regimes encountered in many applications. \newtext{Moreover, the fields considered in Section~\ref{sec:experiments} are relatively small subsets of the overall image, and hence we would also like to evaluate how NODF scales to full-brain fits.}
\par 
To investigate this, we apply our method to diffusion data from the Adolescent Brain Cognitive Development (ABCD) study \citeSupp{casey2018}. The ABCD dataset is a large-scale, longitudinal study that aims to track brain development and other health outcomes in over 10,000 children and adolescents in the United States. Full imaging and acquisition protocol for the diffusion scans can be found \citeSupp{casey2018}. \newtext{We chose the ABCD dataset for this evaluation because it includes resolutions and signal-to-noise ratios that are more typical in diffusion MRI studies. We processed the diffusion data of a randomly selected test subject using the publicly available PSC pipeline \citeSupp{zhang2018}. Preprocessing steps include eddy correction, motion correction, and $b=0$ inhomogeneity correction and brain masking. We use the b=3,000 ($M=60$) shell data for analysis.} 
\par 
\newtext{NODF was fit using the approach outlined in Section~\ref{sec:stat_inference} and the parameters discussed in Section~\ref{ssec:implementation_details} for $b=3,000$ data, except we set $L=8$ and $r=1,024$ in order to accommodate the increased signal complexity of the full-brain image. We also find that for numerical stability and convergence in optimization, a reduced learning rate of $10^{-6}$ was preferable. We trained the field for $2,500$ iterations.} 
\par 
\newtext{The top row of Figure~\ref{fig:abcd_results} shows the resulting full-brain field estimates, with spherical deconvolution applied to the estimated ODFs to improve visual clarity of the directional information. We find there is a strong concurrence between the inferred directions obtained from the field and our expectations based on anatomical information across the white matter regions.  
For instance, in the zoomed in ROI in the bottom middle panel, we see the characteristic crossing fiber region expected in the centrum semiovale.}

\bibliographystyleSupp{chicago}
\bibliographySupp{refs}

\end{document}